\documentclass{article}

\usepackage{amsmath}
\usepackage{url}
\usepackage{amssymb}
\usepackage{amsthm}
\usepackage{hyperref}

\newtheorem{prop}{Proposition}
\newtheorem{conj}{Conjecture}

\title{Weight spectra of some families of GRM codes}
\author{Minjia Shi, Zhaokang Xing, and Patrick Sol\'e%
	\thanks{Corresponding author: Minjia Shi.}%
	\thanks{Minjia Shi and Zhaokang Xing are with the Key Laboratory of Intelligent Computing Signal
		Processing, Ministry of Education, School of Mathematical Sciences, Anhui
		University, Hefei 230601, China, and also with the State Key Laboratory of
		Integrated Service Networks, Xidian University, Xi'an 710071, China
		(e-mail: smjwcl.good@163.com; 19565753226@163.com). Patrick Sol\'e is with
		I2M (Aix Marseille Univ, CNRS), Marseille, France
		(e-mail: sole@enst.fr).}}
\date{}

\begin{document}
\maketitle
\begin{abstract}
Let $\mathrm{RM}_q(r,m)$ denote the generalized Reed--Muller code of order $r$ and length $q^m$ over the finite field $\mathbb{F}_q$.
We completely determine the weight spectra of $\mathrm{RM}_3(2m-i,m)$ for $i=3,4$, $\mathrm{RM}_4(3m-i,m)$ for $i=2,3$, $\mathrm{RM}_5(4m-4,m)$, and $\mathrm{RM}_7(6m-4,m)$, in the ranges of $m$ specified in the corresponding results.
The proofs proceed by induction on $m$, using a sumset inclusion for GRM weight spectra together with results on low-weight codewords. The required base cases are established by computations in Magma and, for certain ternary cases, exact constraint solving with Z3.
\end{abstract}
\noindent\textbf{Keywords:} Generalized Reed--Muller codes, weight spectra\\
\textbf{Mathematics Subject Classification (2020):} 94B05
\section{Introduction}
 
 Generalized Reed--Muller (GRM) codes were first introduced by Kasami, Lin, and Peterson~\cite{Kasami} and by Weldon~\cite{Weldon}, and were subsequently studied by Delsarte, Goethals, and MacWilliams~\cite{Delsarte1970}.
GRM codes constitute an important family of evaluation codes with applications in various areas of information theory, such as power control for OFDM communication systems \cite{Paterson2000}, the construction of quantum error-correcting codes \cite{GalindoHernandoRuano2015}, and capacity-achieving coding over certain classes of non-binary symmetric channels \cite{ReevesPfister2023}. Determining the Hamming weights of GRM codes has attracted considerable attention. We summarize the known results below.

For GRM codes of order $0$, the nonzero entries of the weight distribution are $A_0=1$ and $A_{q^m}=q-1$. For order $1$, they are
$A_0=1, A_{(q-1)q^{m-1}}=q(q^m-1), A_{q^m}=q-1.$
The weight distribution for order $2$ was investigated by McEliece~\cite{McEliece}; Li~\cite{Li2019} subsequently corrected errors in the general $q$-ary case and provided a complete account. By the duality relation and the MacWilliams identity, the weight distributions for orders $m(q-1)-1$, $m(q-1)-2$, and $m(q-1)-3$ can be obtained from those for orders $0$, $1$, and $2$, respectively, whenever these orders lie in the admissible range. For order $m(q-1)$, the code is the whole space $\mathbb{F}_q^{q^m}$, and hence
$A_i=\binom{q^m}{i}(q-1)^i, 0\leq i\leq q^m.$
Determining the complete weight distribution for intermediate orders is generally much more difficult. A related problem is to determine the \emph{weight spectrum}, namely the set of distinct weights that occur; throughout this paper, this set includes the zero weight.

For $q=2$, the weight spectra of $\mathrm{RM}_2(m-i,m)$ for $i=1,2,3,4,5,6$ have been determined in the parameter ranges treated in \cite{ShiLiNeriSole2019,CarletSole2023,Carlet2024,LouWang2025}. For $q=3$, the spectra of $\mathrm{RM}_3(2m-i,m)$ for $i=1,2$ were determined in \cite{ShiLiNeriSole2019}. The same reference gives the spectrum of $\mathrm{RM}_4(3m-1,m)$. For $q=5$, spectra for $\mathrm{RM}_5(4m-i,m)$ with $i=1,2,3,5$ are treated in \cite{ShiLiNeriSole2019,Golalizadeh2026}. For $q=7$, results for $\mathrm{RM}_7(6m-i,m)$ with $i=1,2,3$ are discussed in \cite{Golalizadeh2026}; that work also gives an almost complete determination for $i=9$, with weights $32$ and $33$ remaining outside its mathematical proof.

The weight spectra of GRM codes nevertheless remain unknown for many parameters. In this paper, we completely determine the spectra of $\mathrm{RM}_3(2m-i,m)$ for $i=3,4$, $\mathrm{RM}_4(3m-i,m)$ for $i=2,3$, $\mathrm{RM}_5(4m-4,m)$, and $\mathrm{RM}_7(6m-4,m)$.
Our main tool is induction on $m$, combined with the sumset inclusion in \cite[Corollary~25]{ShiLiNeriSole2019}. Although some of these spectra can, in principle, be extracted from known weight distributions by the MacWilliams identity, explicit descriptions do not appear to have been given in the literature. We provide direct inductive proofs of these descriptions. The base cases use computations in Magma~\cite{magma} and, for the ternary family $\mathrm{RM}_3(2m-4,m)$, exact constraint computations implemented in Python with the Z3 solver~\cite{Z3}. The latter computations and their verification procedure are described in Section~\ref{sec:computational}.

The paper is organized as follows. Section~\ref{sec:preliminaries} introduces the notation and results used throughout the paper. Sections~\ref{sec:ternary}, \ref{sec:quaternary}, \ref{sec:quinary}, and~\ref{sec:seven} determine weight spectra over $\mathbb{F}_3$, $\mathbb{F}_4$, $\mathbb{F}_5$, and $\mathbb{F}_7$, respectively. Section~\ref{open problem} recapitulates the obtained results, and mentions some challenging open problems.



\section{Preliminaries}\label{sec:preliminaries}
Let $\mathbb{F}_q$ be the finite field with $q$ elements, where $q$ is a prime power.
A linear code of length $n$ over $\mathbb{F}_q$ is an $\mathbb{F}_q$-subspace of $\mathbb{F}_q^n$.
The dimension of the code is its vector-space dimension over $\mathbb{F}_q$.
The Hamming weight $w_H(\boldsymbol{x})$ of an element $\boldsymbol{x}=(x_1,\dots,x_n)\in\mathbb{F}_q^n$ is the number of indices $i$ such that $x_i\neq 0$.
The minimum distance of a nonzero linear code is the minimum Hamming weight of a nonzero codeword.

The weights of a code $C$ are the Hamming weights of all its codewords.
The set of distinct weights, including zero, is called the weight spectrum and is denoted by $\mathrm{wt}(C)$; thus
\[
\mathrm{wt}(C) = \bigl\{i \in \{0,1,\ldots,n\} \,\big|\, \exists\, \mathbf c \in C \text{ such that } w_H(\mathbf c) = i\bigr\}.
\]
For $0\leq i\leq n$, let $A_i$ denote the number of codewords of weight $i$. The sequence $(A_0,A_1,\ldots,A_n)$ is called the weight distribution. The first, second, and third weights refer to the first three distinct positive weights, when they exist.

Let $m$ be a positive integer, and consider $R_m:=\mathbb{F}_q[x_1,\dots,x_m]$, the ring of polynomials in $m$ variables over $\mathbb{F}_q$.
Moreover, list all the points of $\mathbb{F}_q^m$ as $\mathbf{P}_1,\dots,\mathbf{P}_n$, where $n=q^m$, and consider the evaluation map
\[
\mathrm{ev}_m\colon R_m\to \mathbb{F}_q^n,
f\mapsto \bigl(f(\mathbf{P}_1),\dots,f(\mathbf{P}_n)\bigr).
\]

Let $r$ be an integer such that $0\le r\le (q-1)m$.
The generalized Reed--Muller code (GRM code) of order $r$ in $m$ variables is defined by
\[
\mathrm{RM}_q(r,m):=\bigl\{\mathrm{ev}_m(f)\,\big|\, f\in R_m,\ \deg(f)\le r\bigr\},
\]
where $\deg(f)$ denotes the total degree of $f$, with the convention $\deg(0)=-\infty$. Since $\beta^q=\beta$ for every $\beta\in\mathbb{F}_q$, a factor $x_i^{q+d}$, with $d\geq0$, can be replaced by $x_i^{d+1}$ without changing the evaluations. Repeating these reductions yields a unique representative with degree at most $q-1$ in each variable. A polynomial satisfying these individual degree bounds is said to be \emph{reduced}. Since reduction never increases total degree, we also have
\[
\mathrm{RM}_q(r,m)=\bigl\{\mathrm{ev}_m(f)\bigm| f\in R_m,\; \deg(f)\leq r,
\ \deg_{x_i}(f)<q\text{ for }1\leq i\leq m\bigr\},
\]
where $\deg_{x_i}(f)$ denotes the degree of $f$ in $x_i$. We identify polynomial functions with their reduced representatives whenever degree arguments are used. In particular, a reduced polynomial that vanishes at every point of $\mathbb{F}_q^m$ is the zero polynomial.

The support of $f\in R_m$ is the set $\{\boldsymbol{x}\in\mathbb{F}_q^m: f(\boldsymbol{x})\neq 0\}$.
The cardinality of this support is denoted by $|f|$ and equals the Hamming weight of $\mathrm{ev}_m(f)$.
We use the notation $f\in \mathrm{RM}_q(r,m)$ to mean that
$(f(\mathbf{P}_1),\dots,f(\mathbf{P}_n))\in \mathrm{RM}_q(r,m)$.

An (affine) hyperplane of \(\mathbb{F}_q^m\) is a subset of the form
\[
H=\left\{\boldsymbol x=(x_1,\ldots,x_m)\in\mathbb{F}_q^m:
a_1x_1+\cdots+a_mx_m=b\right\},
\]
where \(a_1,\ldots,a_m,b\in\mathbb{F}_q\) and $(a_1,\ldots,a_m)\neq(0,\ldots,0)$.

Let $t$ be a positive integer. Given $t$ sets of integers $A_1,\dots,A_t$, we define the set of sums
\[
\bigoplus_{i=1}^t A_i:=\bigl\{a_1+a_2+\dots+a_t \,\big|\, a_i\in A_i\bigr\}.
\]

The following sumset inclusion is essential to the inductive proofs in the subsequent sections. We state it in the range where both codes have nonnegative order and a positive number of variables.

\begin{prop}[\cite{ShiLiNeriSole2019}]\label{prop:spectrum-inclusion}
	Let $r,m$ be integers such that $m\geq2$ and $0\leq r\leq(m-1)(q-1)-1$. Then
	\[
	\mathrm{wt}\big(\mathrm{RM}_q\big(m(q-1)-r-1,\,m\big)\big) \supseteq \bigoplus_{i=1}^q \mathrm{wt}\big(\mathrm{RM}_q\big((m-1)(q-1)-r-1,\,m-1\big)\big).
	\]
\end{prop}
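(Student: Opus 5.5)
We prove the inclusion with an explicit "concatenation" construction. Set $s=(m-1)(q-1)-r-1$, the order of the code in $m-1$ variables. The hypothesis $0\le r\le (m-1)(q-1)-1$ gives $0\le s\le (m-1)(q-1)-1$. The order of the target code is $m(q-1)-r-1=s+(q-1)$.

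Enumerate $\mathbb{F}_q=\{\alpha_1,\dots,\alpha_q\}$. For each $\alpha\in\mathbb{F}_q$, define the Lagrange indicator
\[
L_\alpha(x_m)=1-(x_m-\alpha)^{q-1}.
\]
It has degree $q-1$, equals $1$ at $x_m=\alpha$, and equals $0$ elsewhere.

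Now pick arbitrary weights $w_1,\dots,w_q\in \mathrm{wt}(\mathrm{RM}_q(s,m-1))$. Choose polynomials $f_1,\dots,f_q\in\mathbb{F}_q[x_1,\dots,x_{m-1}]$ with $\deg f_i\le s$ and $|f_i|=w_i$; if some $w_i=0$, take $f_i=0$. Define
\[
F(x_1,\dots,x_m)=\sum_{i=1}^q L_{\alpha_i}(x_m)\,f_i(x_1,\dots,x_{m-1}).
\]
Each summand has total degree at most $(q-1)+s=m(q-1)-r-1$, so $F\in\mathrm{RM}_q(m(q-1)-r-1,m)$.

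To compute the weight, fix $x_m=\alpha_j$. Then $F$ restricts to $f_j$ on the hyperplane $\{x_m=\alpha_j\}$. These $q$ hyperplanes partition $\mathbb{F}_q^m$, so
\[
|F|=\sum_{j=1}^q |f_j|=\sum_j w_j.
\]
Since the $w_i$ were arbitrary, every element of the sumset lies in the weight spectrum of the larger code, which is the claimed inclusion.

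No step here is a real obstacle. The only point needing care is the degree bookkeeping: the construction multiplies by a degree-$(q-1)$ factor, so it only needs the orders of the two codes to differ by exactly $q-1$, which they do. The hypotheses $m\ge2$ and $r\le (m-1)(q-1)-1$ are used only to make $\mathrm{RM}_q(s,m-1)$ a well-defined code with $s\ge 0$ and at least one variable. Reduction of $F$ modulo $x_i^q-x_i$ is unnecessary, since membership in the code only requires some representative of degree $\le m(q-1)-r-1$.
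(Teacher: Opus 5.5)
Your proof is correct and complete. The Lagrange factors $1-(x_m-\alpha)^{q-1}$ raise the degree by exactly $q-1$, which is the gap between the two orders. Since $L_{\alpha_i}(\alpha_j)=\delta_{ij}$, the restriction of $F$ to the hyperplane $x_m=\alpha_j$ is $f_j$, so the weights add over the partition of $\mathbb{F}_q^m$ into these $q$ parallel hyperplanes.

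The paper gives no proof of this statement; it imports it from \cite[Corollary~25]{ShiLiNeriSole2019}. Your argument is therefore a self-contained justification rather than something to match against an in-paper proof.

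The phrasing of the statement suggests a dual-side version of the same idea, since $m(q-1)-r-1$ and $(m-1)(q-1)-r-1$ are the orders of $\mathrm{RM}_q(r,m)^\perp$ and $\mathrm{RM}_q(r,m-1)^\perp$. Take $c_1,\dots,c_q\in\mathrm{RM}_q(r,m-1)^\perp$ and concatenate them along the hyperplanes $x_m=\alpha_i$. The result is orthogonal to $\mathrm{ev}_m(g)$ for every $g$ with $\deg g\le r$, because each restriction $g(\cdot,\alpha_i)$ lies in $\mathrm{RM}_q(r,m-1)$. That route needs the GRM duality theorem. It is there that the hypothesis $0\le r\le(m-1)(q-1)-1$ enters, to identify both duals as GRM codes of nonnegative order.

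Your primal construction avoids duality and needs only one-variable interpolation. It also gives the stronger code-level fact that $\mathrm{RM}_q(s+q-1,m)$ contains every concatenation $(f_1\mid\cdots\mid f_q)$ with $f_i\in\mathrm{RM}_q(s,m-1)$, for any $0\le s\le(m-1)(q-1)$. This is the converse of the decomposition $|f|=|u|+|v|$ along parallel hyperplanes that the paper uses in the proof of Proposition~\ref{prop:no 13,14}.
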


The next two results give the minimum distance and the second weight of GRM codes, respectively. For the second weight, we record only the cases needed below.
	
\begin{prop}[\cite{ShiLiNeriSole2019}]\label{prop:minimum-distance}
    Let $m\geq1$ and $0\leq r\leq m(q-1)$, and write
    $r=Q(q-1)+S$, where $0\leq S\leq q-2$.
    The minimum distance of $\mathrm{RM}_q(r,m)$ is
    \[
    d_q(r,m)=(q-S)q^{m-Q-1}.
    \]
    In particular, $d_q(m(q-1),m)=1$.
\end{prop}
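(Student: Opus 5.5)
The plan is to prove the two halves separately: the displayed weight $(q-S)q^{m-Q-1}$ is attained, and no nonzero codeword of $\mathrm{RM}_q(r,m)$ has smaller weight. Throughout, write $r=Q(q-1)+S$ with $0\le S\le q-2$. The only exception is $r=m(q-1)$, where we take $Q=m-1$ and $S=q-1$ so that the formula yields $1$; in that case the code is the whole space and the claim is immediate. We work with reduced polynomials as in the preliminaries.

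\emph{Attainment.} Take
\[
f=\prod_{i=1}^{Q}\bigl(1-x_i^{q-1}\bigr)\prod_{j=1}^{S}\bigl(x_{Q+1}-\beta_j\bigr),
\]
where $\beta_1,\dots,\beta_S$ are distinct elements of $\mathbb{F}_q$. This is possible because $S\le q-2$. The polynomial has degree $Q(q-1)+S=r$ and is nonzero exactly when $x_1=\dots=x_Q=0$ and $x_{Q+1}\notin\{\beta_j\}$. Its weight is therefore $(q-S)q^{m-Q-1}$. If $Q=m$, then $S=0$, $r=m(q-1)$, and this case has already been handled.

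\emph{Lower bound.} We argue by induction on $m$. For $m=1$, a nonzero univariate reduced polynomial of degree $r\le q-1$ has at most $r$ roots, so its weight is at least $q-r$. This matches the formula, with the case $r=q-1$ giving $1$. For the inductive step, write a nonzero reduced $f$ as
\[
f=\sum_{k=0}^{q-1}x_m^k g_k(x_1,\dots,x_{m-1}).
\]
For each $a\in\mathbb{F}_q$, the restriction $f_a=f(\cdot,a)$ is a polynomial in $m-1$ variables of degree at most $r$. Let $t$ be the largest index with $g_t\ne0$, so $\deg g_t\le r-t$. Among the $q$ slices, at least $q-t$ have $f_a\ne0$, since $f_a$ viewed as a polynomial in $a$ has degree $t$ at some point where $g_t\neq 0$. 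The weight of $f$ is the sum of the weights of the slices.

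The main step is to optimize this sum. One approach is the classical one of Delsarte--Goethals--MacWilliams and Kasami--Lin--Peterson, using the finite differences $f(\cdot,a+c)-f(\cdot,a)$, which have degree at most $r-1$. A cleaner alternative is the following. Each nonzero slice has degree at most $r$ and hence, by induction, weight at least $d_q(r,m-1)$. There are at least $q-t$ nonzero slices. Moreover, the nonzero slices do not all vanish at the points where $g_t\ne 0$, and $g_t$ has degree at most $r-t$, so by induction its support has size at least $d_q(r-t,m-1)$. Over each point of the support of $g_t$, the univariate polynomial in $x_m$ has degree $t$ and is nonzero, so it has at least $q-t$ nonzero values. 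This gives the bound
\[
|f|\ \ge\ (q-t)\,d_q(r-t,m-1),
\]
with the convention that $d_q(r-t,m-1)$ is $1$ when $r-t\ge (m-1)(q-1)$.

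It then remains to check the inequality
\[
\min_{0\le t\le \min(q-1,r)}(q-t)\,d_q(r-t,m-1)\ \ge\ (q-S)q^{m-Q-1}.
\]
Write $r-t=Q'(q-1)+S'$. If $t\le S$, then $Q'=Q$ and $S'=S-t$. The product becomes $(q-t)(q-S+t)q^{m-Q-2}$, which is at least $q(q-S)q^{m-Q-2}$ because $(q-t)(q-S+t)-q(q-S)=t(S-t)\ge0$. If $t>S$, then $Q'=Q-1$ and $S'=S-t+q-1$. The product becomes $(q-t)(t-S+1)q^{m-Q-1}$, and $(q-t)(t-S+1)\ge q-S$ holds since the left side is concave in $t$ and equals $q-S$ at $t=S$ and at least $2$ at $t=q-1$. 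Boundary cases, where $m-1$ variables force the convention $d=1$, need separate care.

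I expect this final arithmetic and the treatment of the boundary cases to be the main obstacle.
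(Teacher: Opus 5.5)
The paper does not prove this proposition; it quotes it from \cite{ShiLiNeriSole2019} and uses it as a black box. Your argument is therefore an independent, self-contained route, and it is correct. Attainment comes from the standard product construction. The lower bound comes from the fiber count $|f|\ge (q-t)\,|g_t|$, where $g_t$ is the leading coefficient of $f$ in $x_m$, with induction on $m$ applied to $g_t\in\mathrm{RM}_q(r-t,m-1)$. This is the Schwartz--Zippel/footprint-type argument and needs nothing beyond the reduced-polynomial setup of the preliminaries. The citation buys the paper brevity; your route buys a complete proof from first principles.

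A few points should be tightened.
\begin{enumerate}
\item $r=m(q-1)$ is not an exception to the formula: with $Q=m$, $S=0$ it already gives $q\cdot q^{-1}=1$. In that case the lower bound is trivial, so the induction may assume $Q\le m-1$.
\item In the case $t>S$, your concavity argument is insufficient as written. Knowing only that the value at $t=q-1$ is at least $2$ yields the bound $\min(q-S,2)$, not $q-S$. In fact that value is exactly $1\cdot(q-S)=q-S$. More directly,
\[
(q-t)(t-S+1)-(q-S)=(t-S)(q-1-t)\ge 0 .
\]
\item The boundary cases you defer are immediate.
\begin{itemize}
\item If $t>S$, then $t\le r$ forces $Q\ge 1$. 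Also $r-t\le (m-2)(q-1)+q-2<(m-1)(q-1)$, so the formula for $d_q(r-t,m-1)$ applies directly.
\item If $t\le S$ and $Q=m-1$, then $d_q(r-t,m-1)=1$. The product $q-t\ge q-S$ is then exactly the target.
\end{itemize}
\item Your remarks about how many slices $f_a$ are nonzero, and about their individual weights, are never used and can be deleted. The fiber count alone carries the proof.
\end{enumerate}
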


\begin{prop}[\cite{BalletRolland2014}]\label{prop:second-weight}
    Let $q\geq3$, $m\geq2$, and $r=s(q-1)+t$, where
    $0\leq s\leq m-2$ and $1\leq t\leq q-1$.
    In each of the following cases, the second weight of $\mathrm{RM}_q(r,m)$ is
    \[
    d_q(r,m)+c\,q^{m-s-2},
    \]
    where
    \[
    c=\begin{cases}
    q-1,&\text{if }q=3,\ s=m-2,\ t=1,\\
    t-1,&\text{if }1<t\leq(q+1)/2\text{ or }t=q-1.
    \end{cases}
    \]
\end{prop}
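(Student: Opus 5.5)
Since this result is quoted from \cite{BalletRolland2014}, we only indicate how we would prove it. Write $d=d_q(r,m)=(q-t)q^{m-s-1}$ (Proposition~\ref{prop:minimum-distance} with $Q=s$, $S=t$ when $t\le q-2$; the case $t=q-1$ is read analogously). The claim splits into two parts:
\begin{itemize}
\item an \emph{upper bound}: some codeword has weight $d+c\,q^{m-s-2}$;
\item a \emph{gap statement}: no codeword has weight strictly between $d$ and $d+c\,q^{m-s-2}$.
\end{itemize}

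\textbf{Upper bound.} We use explicit products of linear forms. For the case $c=t-1$, take
\[
f=\prod_{i=1}^{s}\bigl(1-x_i^{q-1}\bigr)\cdot\prod_{j=1}^{t-1}(x_{s+1}-a_j)\cdot(x_{s+2}-b),
\]
with distinct $a_j\in\mathbb{F}_q$ and any $b\in\mathbb{F}_q$. This has degree $s(q-1)+t=r$.
\begin{itemize}
\item The first factor restricts the support to a subspace of dimension $m-s$.
\item In the $(x_{s+1},x_{s+2})$-plane, the zero set is $t-1$ parallel lines together with one transversal line. It therefore has $(t-1)q+q-(t-1)$ points, leaving $(q-t)q+(t-1)$ nonzero points.
\item Multiplying by the $q^{m-s-2}$ free coordinates gives weight exactly $d+(t-1)q^{m-s-2}$.
\end{itemize}
For the exceptional case $q=3$, $s=m-2$, $t=1$, the target weight is $6+2=8$. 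Here we would replace $s$ indicator factors by $s-1$ of them. This leaves a polynomial of degree $5$ in three variables, or equivalently a suitable degree-$3$ or degree-$4$ configuration in fewer variables. We would search for one of weight $8$, for instance the complement of a union of lines in $\mathbb{F}_3^2$ lifted appropriately, or a direct small search in Magma.

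\textbf{Gap statement.} We argue by induction on $m$, with the base case $m=2$ settled by a direct analysis of plane curves of degree $t$ over $\mathbb{F}_q$, or by computation for small $q$. Fix $f$ of degree $\le r$ with $d<|f|<d+c\,q^{m-s-2}$.
\begin{itemize}
\item Choose a parallel class of hyperplanes $H_1,\dots,H_q$.
\item Each restriction $f|_{H_k}$ lies in $\mathrm{RM}_q(r,m-1)$, so $|f|=\sum_k|f|_{H_k}|$.
\item By the induction hypothesis, each summand is $0$, equal to $d_q(r,m-1)$, or at least the second weight in dimension $m-1$.
\item Comparing the total with $d+c\,q^{m-s-2}$ and averaging over all parallel classes (an Erickson-type counting argument), we force one of two situations. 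Either all nonzero restrictions are minimum-weight codewords, or the number of nonzero restrictions is small enough that $f$ is divisible by products of affine forms $\prod(x_\ell-a)$.
\end{itemize}
In the first situation, the Delsarte--Goethals--MacWilliams characterization says that minimum-weight codewords are supported on complements of unions of parallel hyperplanes inside an affine subspace. We would use this to show that the supports glue to such a configuration in $\mathbb{F}_q^m$, giving $|f|=d$, a contradiction. In the second situation, the divisibility reduces $f$ to a lower-dimensional instance covered by induction.

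\textbf{Main obstacle.} We expect the gluing step to be the hardest part: showing that minimum-weight restrictions on all hyperplanes of every parallel class force a global minimum-weight configuration. The restriction $t\le (q+1)/2$ or $t=q-1$ is exactly where this rigidity holds. For larger $t$, other configurations, such as a conic-like union of non-parallel lines, can give weights below $d+(t-1)q^{m-s-2}$. So the argument must use this restriction essentially, most likely through a count of how many lines of a degree-$t$ curve can be pairwise non-parallel before the weight drops.
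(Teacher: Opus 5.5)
Your witness for $c=t-1$ is correct; it is the standard one. The paper itself gives no proof here; it imports the result from Ballet--Rolland. So your sketch has to stand on its own, and the part of the statement with real content does not: you never show that no weight lies strictly between $d$ and $d+c\,q^{m-s-2}$, and the induction you set up cannot show it. Restricting to a parallel class keeps $r$ and lowers $m$. So when $s=m-2$, the sections lie in $\mathrm{RM}_q(r,m-1)$ with $s=(m-1)-1$, outside the range of the statement. There is no induction hypothesis there, and from the base $m=2$ your scheme only ever reaches $s=0$. These boundary cases are exactly the ones the paper uses, and in them the sections carry essentially no information:
for $\mathrm{RM}_4(3m,m+1)$ the sections lie in $\mathrm{RM}_4(3m,m)=\mathbb{F}_4^{4^m}$, the whole space;
for $q=3$, $t=1$, the sections of a codeword of $\mathrm{RM}_3(2m-3,m)$ lie in the zero-sum code $\mathrm{RM}_3(2m-3,m-1)$, whose spectrum is $\{0,2,3,\ldots,3^{m-1}\}$, so bookkeeping of section weights cannot exclude a weight such as $7=2+2+3$.

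Even for $s\le m-3$ your dichotomy targets the wrong case. The gluing situation you call the main obstacle is vacuous: at most $q$ sections, each of weight $d_q(r,m-1)=d/q$, sum to at most $d$. The configurations that must be excluded are mixed ones. For example, $q-1$ minimum-weight sections plus one second-weight section give total $d+c\,q^{m-s-3}<d+c\,q^{m-s-2}$. Ruling these out across all parallel classes at once is the whole difficulty, and ``averaging'' is not an argument for it. In the literature this step rests on point bounds for plane curves and blocking-set arguments (Erickson, Bruen). Your base case, ``direct analysis of plane curves of degree $t$'', is that same nontrivial input, not a routine check.

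The exceptional case is not handled either.
\begin{itemize}
\item Keeping $s-1$ indicator factors leaves degree $r-2(s-1)=3$, not $5$. The witness is $\prod_{i=1}^{s-1}(1-x_i^2)\,x_sx_{s+1}x_{s+2}$, of weight $2^3=8$ (the form in Proposition~\ref{prop:second weight codeword}).
\item You offer no lower bound for this case at all.
\item The witness needs $s\ge1$, and for $m=2$ the exceptional clause is actually false: $\mathrm{RM}_3(1,2)$ has spectrum $\{0,6,9\}$, so its second weight is $9$, not $8$. The statement needs $m\ge3$ there. This is harmless for the paper, which only applies it with at least four variables, but your sketch does not notice it.
\end{itemize}

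Finally, your explanation of the hypothesis on $t$ does not hold up. Take $t$ lines of $\mathbb{F}_q^2$ that are not all parallel, and order them so that the first two meet. Every later line meets one of these two, so the union has at most $q+(t-1)(q-1)=tq-t+1$ points. Hence unions of non-parallel lines never beat $t-1$ parallel lines plus a transversal. The restriction on $t$ is a hypothesis of the cited theorem, and your example does not account for it.
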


We also use the following geometric result on sets meeting every affine hyperplane.
\begin{prop}[\cite{BrouwerSchrijver1978}]\label{prop:intersect}
	The minimum cardinality of a subset of $\mathbb{F}_q^m$ which intersects all hyperplanes is $m(q-1)+1$.
\end{prop}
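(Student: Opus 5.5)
The plan is to prove the two inequalities separately: an explicit construction gives the upper bound, and the polynomial method (Jamison's argument) gives the lower bound. The lower bound uses the reduced-representative uniqueness recorded in Section~\ref{sec:preliminaries}.

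\emph{Upper bound.} Let $e_1,\dots,e_m$ be the standard basis. Take
\[
S=\{\mathbf 0\}\cup\{t e_i : 1\le i\le m,\ t\in\mathbb{F}_q^*\},
\]
which has exactly $m(q-1)+1$ points. Consider a hyperplane $a_1x_1+\cdots+a_mx_m=b$.
\begin{itemize}
\item If $b=0$, it contains $\mathbf 0$.
\item If $b\neq 0$, choose $i$ with $a_i\neq0$. Then the point $(b/a_i)e_i$ lies in $S$ and on the hyperplane.
\end{itemize}
So $S$ meets every hyperplane.

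\emph{Lower bound.} Let $S$ meet every hyperplane. Since translations permute hyperplanes, we may assume $\mathbf 0\in S$. Put $S'=S\setminus\{\mathbf 0\}$ and $N=|S'|$.

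For every $a\in\mathbb{F}_q^m\setminus\{\mathbf 0\}$, the hyperplane $a\cdot x=1$ misses $\mathbf 0$, so it contains some $s\in S'$. Define the polynomial in the variables $a=(a_1,\dots,a_m)$
\[
F(a)=\prod_{s\in S'}\bigl(1-a\cdot s\bigr).
\]
It has total degree at most $N$, satisfies $F(a)=0$ for every $a\neq\mathbf 0$, and satisfies $F(\mathbf 0)=1$.

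Now compare with
\[
\delta(a)=\prod_{i=1}^m\bigl(1-a_i^{q-1}\bigr).
\]
This polynomial is reduced (each variable has degree at most $q-1$), and it takes exactly the same values as $F$ on $\mathbb{F}_q^m$. By uniqueness of reduced representatives, the reduction of $F$ equals $\delta$. Reduction never increases total degree, so
\[
N\ge \deg F\ge \deg\delta=m(q-1).
\]
Hence $|S|=N+1\ge m(q-1)+1$.

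The key step is the lower bound, and specifically the observation that the hyperplanes avoiding a fixed point of $S$ can be encoded as the nonzero zeros of a product of affine linear forms. Once that encoding is in place, the degree comparison via reduced representatives is routine.
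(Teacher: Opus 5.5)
Your proof is correct. The explicit set $\{\mathbf 0\}\cup\{te_i\}$ gives the upper bound, and comparing $\prod_{s\in S'}(1-a\cdot s)$ with the reduced indicator $\prod_{i}(1-a_i^{q-1})$ of the origin gives the lower bound. The paper does not prove this proposition itself; it only cites Brouwer and Schrijver, and your argument is essentially the short polynomial proof usually given for this result.
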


The following factorization property will be used to analyze codewords whose support avoids a hyperplane.
\begin{prop}[\cite{Delsarte1970}]\label{prop:fx=0}
    Let $f\in R_m$ be reduced and let $a\in\mathbb{F}_q$.
    If $f(\boldsymbol{x})=0$ whenever $x_1=a$, then
    $f(\boldsymbol{x})=(x_1-a)g(\boldsymbol{x})$ for some reduced polynomial $g$ such that the product on the right is reduced. If $f\neq0$, then $\deg(g)\leq\deg(f)-1$ and $\deg_{x_1}(g)\leq q-2$.
\end{prop}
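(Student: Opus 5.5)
The plan is to use ordinary Euclidean division in the variable $x_1$ over the ring $\mathbb{F}_q[x_2,\dots,x_m]$. The identity $f=(x_1-a)g$ will then hold as polynomials, not only as functions. This gives the degree statements directly, and it avoids any further reduction step that might lower degrees unpredictably.

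First, view $f$ as a polynomial in $x_1$ with coefficients in $R'=\mathbb{F}_q[x_2,\dots,x_m]$. Since $x_1-a$ is monic in $x_1$, division gives
\[
f=(x_1-a)g+h,
\]
where $h\in R'$ does not involve $x_1$. The quotient $g$ is obtained by synthetic division: its coefficients are $R'$-linear combinations (with powers of $a$) of the coefficients of $f$. Hence $\deg_{x_i}(g)\le\deg_{x_i}(f)\le q-1$ for $i\ge2$, and $\deg_{x_1}(g)=\deg_{x_1}(f)-1\le q-2$ when $\deg_{x_1}(f)\ge1$. The remainder is $h(x_2,\dots,x_m)=f(a,x_2,\dots,x_m)$, which is a combination of the coefficients of $f$, so $h$ is reduced in $x_2,\dots,x_m$. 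By hypothesis $h$ vanishes at every point of $\mathbb{F}_q^{m-1}$. A reduced polynomial vanishing everywhere is zero, as noted in the preliminaries, so $h=0$.

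Consequently $f=(x_1-a)g$ as polynomials. Here $g$ is reduced, and the product is reduced because its $x_1$-degree is at most $(q-2)+1=q-1$ and the degrees in the other variables are those of $g$. For the degree claim, assume $f\ne0$, so $g\neq0$. Let $g_{\mathrm{top}}$ be the top homogeneous component of $g$. The top homogeneous component of $(x_1-a)g$ is $x_1g_{\mathrm{top}}\neq0$, so $\deg f=\deg g+1$, which gives $\deg g\le\deg f-1$. If instead $\deg_{x_1}(f)=0$, then $f=h=0$, and the nonzero case does not arise.

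The only delicate step is justifying that $h$ is reduced, so that vanishing on all of $\mathbb{F}_q^{m-1}$ forces $h=0$. This follows because substituting $x_1=a$ cannot raise the degree in any other variable. Everything else is routine bookkeeping with division by a monic linear polynomial.
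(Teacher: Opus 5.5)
Your proof is correct and complete. The paper gives no proof of this statement to compare against: it quotes the result from Delsarte, Goethals, and MacWilliams. Your argument supplies a self-contained proof, and it in fact yields the sharper $\deg g=\deg f-1$.

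The steps are sound. Dividing by the monic factor $x_1-a$ in $R'[x_1]$ gives the remainder $h=f(a,x_2,\ldots,x_m)$. This remainder is reduced and vanishes on $\mathbb{F}_q^{m-1}$, so it is zero, and therefore $f=(x_1-a)g$ holds as polynomials. The degree and reducedness claims then follow, including the observation that $f\neq0$ forces $\deg_{x_1}f\geq1$.

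Two small wording points:
\begin{itemize}
\item The quotient coefficients are $b_j=\sum_{k>j}a^{k-j-1}c_k$, that is, $\mathbb{F}_q$-linear combinations of the $x_1$-coefficients $c_k$ of $f$. Calling them ``$R'$-linear combinations'' is misleading: that property alone would not bound $\deg_{x_i}(g)$ for $i\geq2$. Your parenthetical about powers of $a$ is what actually makes the bound work.
\item Reducedness of the product needs no separate degree count, since the product is literally $f$, which is reduced by hypothesis.
\end{itemize}
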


The next two results describe minimum-weight codewords of GRM codes and second-weight codewords in the ternary case needed below.

\begin{prop}[\cite{Leducq2012}]\label{prop:minimun weight codeword}
	Let $r=t(q-1)+s$, where $0\leq t\leq m-1$ and $0\leq s\leq q-2$. Every minimum-weight codeword of $\mathrm{RM}_q(r,m)$ is equivalent, under an invertible affine change of variables, to the evaluation of a function of the form
	\[
	\forall \boldsymbol{x}\in\mathbb{F}_q^m,
	f(\boldsymbol{x})=c\prod_{i=1}^{t}\bigl(x_i^{q-1}-1\bigr)\prod_{j=1}^{s}(x_{t+1}-b_j)
	\]
	where $c\in\mathbb{F}_q^*$ and $b_1,\ldots,b_s$ are distinct elements of $\mathbb{F}_q$.
\end{prop}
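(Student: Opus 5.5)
This is the classical characterization of minimum-weight codewords of GRM codes (Delsarte--Goethals--MacWilliams, with the proof completed by Leducq). I would prove it by induction on $m$, using hyperplane restrictions and the factorization Proposition~\ref{prop:fx=0}.

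Let $f$ be reduced with $\deg f\le r$ and $|f|=d_q(r,m)=(q-s)q^{m-t-1}$.

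\textbf{Base case $t=0$.} Here $r=s\le q-2$ and $d=(q-s)q^{m-1}$, so $f$ vanishes on exactly $s\,q^{m-1}$ points. I would first show that the zero set is a union of $s$ parallel hyperplanes. The key claim is that some affine hyperplane $H$ satisfies $f|_H\equiv0$. Then Proposition~\ref{prop:fx=0}, after an affine change of variables sending $H$ to $\{x_1=b_1\}$, gives $f=(x_1-b_1)g$ with $\deg g\le s-1$ and $|g|\ge |f|-$ (points where $x_1-b_1=0$), which forces $g$ to again be of minimum weight on the complement. An induction on $s$ then yields $f=c\prod_j(x_1-b_j)$. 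To find $H$, I would argue by contradiction. If $f$ vanished identically on no hyperplane, then each restriction $f|_H$ would be a nonzero polynomial of degree $\le s$ in $m-1$ variables. By Proposition~\ref{prop:minimum-distance}, each such restriction has weight $\ge (q-s)q^{m-2}$. Summing over a parallel class of $q$ hyperplanes gives equality, so every restriction is itself of minimum weight. Induction on $m$ then says each restriction is a product of $s$ linear factors in one direction. Gluing these across two different parallel classes, and using degree $\le s<q$, should force the zero set to be $s$ parallel hyperplanes. Alternatively, one can invoke a counting argument: the zero set $Z$, of size $sq^{m-1}$, must meet a suitable line in at least $s+1$ points, so $f$ vanishes on that line, and this eventually produces a vanishing hyperplane.

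\textbf{Case $t\ge1$.} Consider restrictions to the $q$ parallel hyperplanes $x_1=a$. Each restriction $f_a$ lies in $\mathrm{RM}_q(r,m-1)$, and its weight is either $0$ or at least $d_q(r,m-1)=(q-s)q^{m-t-2}$. The sum of the weights equals $(q-s)q^{m-t-1}=q\cdot d_q(r,m-1)$. I would aim to show that some direction has many zero restrictions, rather than all $q$ restrictions being nonzero. If some direction has a zero restriction, say $f|_{x_1=a}=0$, Proposition~\ref{prop:fx=0} gives $f=(x_1-a)g$. Iterating, if $f$ vanishes on the $q-1$ hyperplanes $x_1=a'$ with $a'\ne a_0$, then $f=(x_1^{q-1}-1)\tilde g$ up to translation. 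Here $\tilde g$ restricted to $x_1=a_0$ has degree $\le r-(q-1)$ and weight $d_q(r-(q-1),m-1)$, so induction applies and yields the product form. For this I need that $f$ is supported in a single hyperplane of some parallel class. To get it, I would use Proposition~\ref{prop:intersect}: the support has size $(q-s)q^{m-t-1}$, and if the support met every hyperplane one would need a contradiction. More effectively, choose a direction minimizing the number of nonzero restrictions, and use the degree drop: for $g=f_a-f_{a'}$, when $f_a=f_{a'}$ composed with translation, the difference quotient in $x_1$ has lower degree. The standard derivative trick is to consider $f(\boldsymbol x+\boldsymbol e)-f(\boldsymbol x)$, which has degree $\le r-1$. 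If this is nonzero, its weight is at most $2d$ yet at least $d_q(r-1,m)$, and comparing these forces $\boldsymbol e$ into a stabilizer-like subspace. The support of a minimum-weight word is then shown to be a union of cosets of a $(m-t-1)$-dimensional space, yielding the structure.

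The main obstacle is precisely this structural step: proving that the support must lie in an affine subspace of codimension $t$ (equivalently, that some hyperplane section of $f$ is identically zero). This is where Leducq's delicate argument, fixing Delsarte et al., is needed, and I would follow the translation/derivative approach just described, handling small $q$ (notably $q=3$, $s=q-2$) separately by direct counting.
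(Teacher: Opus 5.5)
The paper does not prove this proposition; it quotes it from Leducq. So your proposal has to stand on its own, and it does not. Your reduction from $t$ to $t-1$ once $S$ (the support of $f$) lies in a single hyperplane is fine. But the step you yourself flag as the main obstacle is never established: that for $t\ge1$ some parallel class has only one nonzero section. Neither tool you offer can establish it.

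\begin{itemize}
\item Proposition~\ref{prop:intersect} guarantees a hyperplane disjoint from $S$ only when $|S|\le m(q-1)$. Here $|S|=(q-s)q^{m-t-1}$ is far larger for small $t$. In any case, one disjoint hyperplane is not what you need: you need $q-1$ parallel ones.
\item The derivative comparison is vacuous. $D_{\boldsymbol e}f=f(\boldsymbol x+\boldsymbol e)-f(\boldsymbol x)$ has weight at most $2|S|$ and, if nonzero, at least $d_q(r-1,m)$. But $d_q(r-1,m)=(q-s+1)q^{m-t-1}\le 2|S|$ when $s\ge1$, and $d_q(r-1,m)=2q^{m-t}=2|S|$ when $s=0$. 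So the two bounds are always compatible and say nothing about $\boldsymbol e$. Indeed, for the genuine minimum-weight words, $D_{\boldsymbol e}f$ is nonzero of weight exactly $2|S|$ for every $\boldsymbol e$ outside the direction space of the $(m-t)$-flat spanned by $S$.
\end{itemize}

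Hence ``the support is then shown to be a union of cosets'' is unsupported. The ``gluing \dots should force'' step in your $t=0$ case is likewise only an assertion. Your section bound $d_q(r,m-1)=(q-s)q^{m-t-2}$ is also meaningless at $t=m-1$, which needs separate treatment (e.g.\ via the dual code $\mathrm{RM}_q(q-2-s,m)$).

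The missing idea is that your base-case averaging, combined with Proposition~\ref{prop:fx=0}, already works for every $t\le m-2$.

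\textbf{Step 1 (trichotomy).} Take a parallel class with $k$ nonzero sections and factor out the $q-k$ vanishing linear factors. The nonzero sections then lie in $\mathrm{RM}_q(r-(q-k),m-1)$. Comparing $k\,d_q(r-(q-k),m-1)$ with $|S|$ leaves only three possibilities: $k=q$ (every section of size $|S|/q$), $k=q-s$ (nonzero sections of size $q^{m-t-1}$), or $k=1$.

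\textbf{Step 2 (some class has $k<q$).} Suppose every class had $k=q$. Fix $P\in S$ and count pairs $(Q,H)$ with $Q\in S\setminus\{P\}$ and $H$ a hyperplane through $P$ and $Q$. This gives $(q^m-1)(|S|/q-1)=(|S|-1)(q^{m-1}-1)$, i.e.\ $|S|=q^m$, which is false unless $r=0$.

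\textbf{Step 3 ($t=0$).} Here $k=1$ is impossible, since it would force $f=0$. So some class has $k=q-s$. The cofactor then has degree $0$, giving $f=c\prod_j(x_1-b_j)$ directly.

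\textbf{Step 4 ($t\ge1$).} Suppose a class has $k=q-s$. Each nonzero section is then a minimum-weight word of $\mathrm{RM}_q(t(q-1),m-1)$, so by induction on $m$ its support is an $(m-t-1)$-flat. Take a hyperplane $H'$ through one such flat and one point of another nonzero section; it exists because their span has dimension $m-t\le m-1$. Then $|S\cap H'|>q^{m-t-1}>(q-s)q^{m-t-2}$, so by the trichotomy $S\subseteq H'$. Thus some class has $k=1$, which is exactly the input your reduction requires.
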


\begin{prop}[\cite{Leducq2013}]\label{prop:second weight codeword}
	Let $m\geq3$ and $1\leq t\leq m-2$. Up to an invertible affine change of variables, every second-weight codeword of $\mathrm{RM}_3(2t+1,m)$ is the evaluation of a function of the form
	\[
	\forall \boldsymbol{x}\in\mathbb{F}_3^m,
	f(\boldsymbol{x})=\alpha\prod_{i=1}^{t-1}\bigl(1-x_i^2\bigr)\,x_t x_{t+1} x_{t+2},
	\]
	where $\alpha\in\mathbb{F}_3^*$.
\end{prop}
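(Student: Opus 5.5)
The plan is to argue by induction on $t$, reducing the order by $2$ at each step through a variable that the support avoids. First I compute the target weight. With $r=2t+1=t\cdot 2+1$, Proposition~\ref{prop:minimum-distance} gives $d_3(2t+1,m)=2\cdot 3^{m-t-1}$. The displayed function is nonzero only when $x_1=\dots=x_{t-1}=0$ and $x_tx_{t+1}x_{t+2}\neq0$, so its weight is $8\cdot 3^{m-t-2}$. This agrees with Proposition~\ref{prop:second-weight}: when $t=m-2$ the first case applies with $c=2$; for smaller $t$ I would take the known value $8\cdot3^{m-t-2}$ as the second weight. So the task is to show that every codeword $f$ of $\mathrm{RM}_3(2t+1,m)$ with $|f|=8\cdot 3^{m-t-2}$ has the stated form.

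\textbf{Inductive step ($t\ge2$).} I would look at the three parallel hyperplanes $\{x_1=a\}$, $a\in\mathbb F_3$, for a suitable direction $x_1$. The restrictions $f_a$ lie in $\mathrm{RM}_3(2t+1,m-1)$ and satisfy $\sum_a|f_a|=8\cdot3^{m-t-2}$. Each nonzero $f_a$ has weight at least $d_3(2t+1,m-1)=2\cdot3^{m-t-2}$, and its next possible weight is $8\cdot 3^{m-t-3}$ or larger. A counting argument over all parallel classes should then produce a hyperplane $H$ disjoint from the support of $f$. The idea is to note that a support of this size is far too small to meet every hyperplane in a "balanced" way; Proposition~\ref{prop:intersect} handles the degenerate small cases.

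After an affine change of variables, Proposition~\ref{prop:fx=0} gives $f=(x_1-a)g$ with $\deg g\le 2t$ and $\deg_{x_1}g\le1$. Applying the same reasoning to a second hyperplane parallel to $H$, or using that $f$ vanishes on $x_1=a$ and has balanced restrictions on the other two slices, I would upgrade this to $f=(1-x_1^2)h$ with $h$ independent of $x_1$. Then $h$ is a codeword of $\mathrm{RM}_3(2t-1,m-1)=\mathrm{RM}_3(2(t-1)+1,m-1)$ of weight $8\cdot3^{m-t-2}=8\cdot3^{(m-1)-(t-1)-2}$, which is exactly its second weight. The induction hypothesis applied to $h$ finishes the step.

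\textbf{Base case ($t=1$) and main obstacle.} Here $f$ is a cubic of weight $8\cdot3^{m-3}$, and I must show it is affinely equivalent to $\pm x_1x_2x_3$. I expect this to be the main difficulty. The plan is again to restrict to parallel hyperplanes, giving restrictions in $\mathrm{RM}_3(3,m-1)$ with minimum weight $2\cdot3^{m-3}$. I would combine this with the fact that the weights of $\mathrm{RM}_3(3,m-1)$ are divisible by $3^{\lceil (m-1)/3\rceil-1}$ (Ax--Katz) to show that in some direction the three slice weights are $(2\cdot3^{m-3},\,2\cdot3^{m-3},\,4\cdot3^{m-3})$ or $(0,\,4\cdot 3^{m-3},\,4\cdot3^{m-3})$.

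Proposition~\ref{prop:minimun weight codeword} then identifies the minimum-weight slices as products of a linear form with $(x_j^2-1)$-type factors. Gluing the slices via Lagrange interpolation in $x_1$ should force $f=x_1\ell_2\ell_3$ with $\ell_2,\ell_3$ affinely independent linear forms. The delicate part is excluding the exotic weight splittings for small $m$ (say $m=3,4$), which I would settle by direct computation before running the general argument.
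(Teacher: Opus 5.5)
The paper gives no proof of this statement. It is quoted from Leducq~\cite{Leducq2013} and used only in Case~1 of the proof of Proposition~\ref{prop:no 13,14}. Your outline therefore has to stand on its own, and it has a genuine gap: the two steps that carry all the content are only asserted, and one of them is false as formulated.

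\textbf{Inductive step.} First, nothing establishes that some hyperplane misses the support $S$. Proposition~\ref{prop:intersect} gives such a hyperplane only when $8\cdot3^{m-t-2}<2m+1$, which fails once $m-t$ exceeds roughly $\log_3 m$. No first-moment count can replace it. In each parallel class the three slice weights sum to $8\cdot3^{m-t-2}$, while a nonzero slice only needs weight at least $d_3(2t+1,m-1)=2\cdot3^{m-t-2}$, so three nonempty slices are numerically possible. The target codeword itself has such classes: direction $x_t+x_{t+1}+x_{t+2}$ gives slice weights $(2,3,3)\cdot3^{m-t-2}$.

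Second, and more seriously, the ``upgrade'' to $f=(1-x_1^2)h$ in the direction of the avoided hyperplane is false. Take $f=\prod_{i<t}(1-x_i^2)\,x_tx_{t+1}x_{t+2}$. The hyperplane $\{x_t=0\}$ misses $S$, and the two parallel slices are perfectly balanced, each of weight $4\cdot3^{m-t-2}$. Yet $S$ lies in no hyperplane $\{x_t=a\}$; the factor $1-x_1^2$ comes from a different direction.

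If you start from an avoided hyperplane as in Case~1 of Proposition~\ref{prop:no 13,14}, you get the following. The two remaining slices satisfy $u,v\in\mathrm{RM}_3(2t,m-1)$, and their degree-$2t$ homogeneous parts satisfy $u_{2t}=-v_{2t}$. By Propositions~\ref{prop:minimum-distance} and~\ref{prop:second-weight}, $\{|u|,|v|\}$ is $\{0,8\}$, $\{3,5\}$ or $\{4,4\}$ times $3^{m-t-2}$. The case $\{4,4\}$ genuinely occurs, as the example above shows. To find the hyperplane that actually contains $S$, you would need the structure of second-weight codewords of $\mathrm{RM}_3(2t,m-1)$ and an argument excluding $\{3,5\}$. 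Your plan supplies neither.

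\textbf{Base case.} It has the same defect.
\begin{itemize}
\item Ax--Katz gives divisibility of the slice weights only by $3^{\lceil (m-1)/3\rceil-1}$. This is below $3^{m-3}$ for every $m\ge4$, so it cannot confine slice weights to multiples of $3^{m-3}$.
\item $x_1x_2x_3$ itself splits as $(2,3,3)\cdot3^{m-3}$ in direction $x_1+x_2+x_3$. Non-canonical splittings therefore occur for every $m$, not just for $m=3,4$, and a finite computation does not dispose of them.
\end{itemize}
Showing that some direction yields $(0,4,4)\cdot3^{m-3}$ or $(2,2,4)\cdot3^{m-3}$ is essentially the whole theorem, and the proposal only asserts it.

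There are also two smaller issues. For $t=m-2$, the code $\mathrm{RM}_3(2t+1,m-1)$ is the dual of the repetition code, whose second weight is $3$, not $8\cdot3^{m-t-3}$. For $t<m-2$, the value $8\cdot3^{m-t-2}$ of the second weight is an external input; Proposition~\ref{prop:second-weight} does not supply it.

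In both steps, what is missing is a real hyperplane-intersection argument that identifies the correct direction. That is precisely the nontrivial content of the cited result.
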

\section{Ternary GRM codes}\label{sec:ternary}
\subsection{The weight spectrum of \texorpdfstring{$\mathrm{RM}_3(2m-3,m)$}{RM3(2m-3,m)}}
    In \cite{ShiLiNeriSole2019}, the authors showed that the set consisting of $0, 6$, and all integers from $8$ to $3^m$ is contained in the weight spectrum of $\mathrm{RM}_3(2m-3,\,m)$. We now prove that no other weights occur.
\begin{prop}\label{prop:rm3-spectrum}
	If $m \ge 3$, then
	$\mathrm{wt}\big(\mathrm{RM}_3(2m-3,\,m)\big) = \big\{0,\,6,\,8,\,9,\,\dots,\,3^m\big\}$.
\end{prop}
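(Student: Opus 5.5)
Since the inclusion $\{0,6,8,9,\dots,3^m\}\subseteq \mathrm{wt}(\mathrm{RM}_3(2m-3,m))$ is already known from \cite{ShiLiNeriSole2019}, the plan is to prove the reverse inclusion. This amounts to showing that no nonzero codeword has weight in $\{1,2,3,4,5,7\}$.

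Write $r=2m-3=(m-2)\cdot 2+1$. Proposition~\ref{prop:minimum-distance} with $Q=m-2$ and $S=1$ gives $d_3(2m-3,m)=(3-1)3^{1}=6$, which excludes the weights $1,\dots,5$. The only remaining case is weight $7$.

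To exclude weight $7$, apply Proposition~\ref{prop:second-weight} with $q=3$, $s=m-2$ and $t=1$. This falls under the first case, $c=q-1=2$, so the second weight is $6+2\cdot 3^{m-s-2}=6+2=8$. Hence no weight lies strictly between $6$ and $8$, and $7$ does not occur. The hypotheses $0\le s\le m-2$ and $m\ge 2$ hold for all $m\ge3$.

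The key step is the second-weight result of Ballet--Rolland, taken as given. If one preferred not to rely on it, an independent argument for excluding $7$ would go as follows. Suppose $f\in\mathrm{RM}_3(2m-3,m)$ with $|f|=7$. Since $7<m\cdot 2+1$ for $m\ge 4$, Proposition~\ref{prop:intersect} provides an affine hyperplane avoiding the support. After an affine change of variables, Proposition~\ref{prop:fx=0} then lets us write $f=(x_1-a)g$. Iterating on the other two slices of $x_1$, the support lies in at most two parallel hyperplanes. The restrictions to these slices are codewords of $\mathrm{RM}_3(2m-4,m-1)$ or $\mathrm{RM}_3(2m-5,m-1)$, whose minimum distances are $3$ and $6$. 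This leaves a small case analysis, supplemented by a direct computation for $m=3$. This fallback is the expected main obstacle, but with Proposition~\ref{prop:second-weight} available, the short argument above suffices.
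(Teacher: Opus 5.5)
Your proof is correct, and its substantive part matches how the paper ends its proof: weights $1,\dots,5$ are excluded by Proposition~\ref{prop:minimum-distance}, and weight $7$ by the $q=3$, $s=m-2$, $t=1$ case of Proposition~\ref{prop:second-weight}. The only difference is that the paper does not simply cite $\{0,6,8,\dots,3^m\}\subseteq\mathrm{wt}(\mathrm{RM}_3(2m-3,m))$ from \cite{ShiLiNeriSole2019}; it re-derives this inclusion by induction on $m$, using the sumset inclusion of Proposition~\ref{prop:spectrum-inclusion} with a Magma computation of $\mathrm{wt}(\mathrm{RM}_3(3,3))$ as the base case, and your optional fallback sketch is not needed.
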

\begin{proof}
	We proceed by induction on $m$. A computation in Magma~\cite{magma} gives
	\[
\mathrm{wt}\big(\mathrm{RM}_3(3,3)\big) = \big\{0,\,6,\,8,\,9,\,\dots,\,27\big\}.
\]
	Thus the assertion holds for $m=3$.
	
	Now assume that the assertion holds for some $m\ge 3$. By Proposition~\ref{prop:spectrum-inclusion},
	\[
	\mathrm{wt}\big(\mathrm{RM}_3(2m-1,\,m+1)\big) \supseteq \bigoplus_{i=1}^3 \big\{0,\,6,\,8,\,9,\,\dots,\,3^m\big\}.
	\]
	Let
    \[
    A=\bigoplus_{i=1}^3 \big\{0,\,6,\,8,\,9,\,\dots,\,3^m\big\}.
    \]
    We claim that
    \[
    A=\big\{0,\,6,\,8,\,9,\,\dots,\,3^{m+1}\big\}.
    \]
	Indeed, the following decompositions cover all the required weights.
\begin{itemize}
\item If $x = 0$, $x=6$, or $8 \leq x \leq 3^m$, we may write $x = x+0+0$, so $x \in A$.
	\item If $x = 3^m + a$ with $1 \leq a \leq 7$, we may write $x = 8 + \big(3^m + a - 8\big) + 0$, which shows that $x\in A$.
\item	If $3^m+8 \leq x \leq 2\cdot 3^m$, we may write $x = 3^m + (x-3^m) + 0$, which shows that $x\in A$.
\item	If $x = 2\cdot 3^m + a$ with $1 \leq a \leq 7$, we may write $x = 3^m + 8 + \big(3^m + a - 8\big)$, which shows that $x\in A$.
\item	If $2\cdot 3^m+8 \leq x \leq 3^{m+1}$, we may write $x = 3^m + 3^m + \big(x - 2\cdot 3^m\big)$, which shows that $x\in A$.
\end{itemize}
    Here $8\leq 3^m+a-8\leq 3^m$ for $1\leq a\leq 7$ in the induction range, so every displayed summand belongs to the defining set of $A$.
	Therefore,
	$\big\{0,6,8,9,\dots,3^{m+1}\big\} \subseteq A$.
	Conversely, every element of $A$ is at most $3\cdot 3^m = 3^{m+1}$. Moreover, since the smallest positive element of the summand set $\big\{0,\,6,\,8,\,9,\,\dots,\,3^m\big\}$ is $6$ and every other positive element is at least $8$, none of $1,2,3,4,5,7$ can belong to $A$. Hence $A \subseteq \big\{0,6,8,9,\dots,3^{m+1}\big\}$.
	As a consequence,
	$A = \big\{0,6,8,9,\dots,3^{m+1}\big\}$.
	
	From Proposition~\ref{prop:minimum-distance}, we have $d_3(2m-1,\,m+1)=6$, and from Proposition~\ref{prop:second-weight}, the second weight of $\mathrm{RM}_3(2m-1,\,m+1)$ is $8$. Hence the weights $1, 2, 3, 4, 5$ and $7$ cannot occur. Since the code has length $3^{m+1}$, no weight can exceed $3^{m+1}$. Together with the above inclusion, this yields
	\[
	\mathrm{wt}\big(\mathrm{RM}_3(2m-1,\,m+1)\big) = \big\{0,\,6,\,8,\,9,\,\dots,\,3^{m+1}\big\}.
	\]
	This completes the proof.
\end{proof}

\subsection{The weight spectrum of \texorpdfstring{$\mathrm{RM}_3(2m-4,m)$}{RM3(2m-4,m)}}
The spectrum of $\mathrm{RM}_3(2,3)$ is
$\{0,9,12,15,18,21,24,27\}$; see \cite{Vance2000}.
Together with Proposition~\ref{prop:spectrum-inclusion}, this gives the following inclusion.
\begin{prop}\label{prop:ternary-multiples}
    If $m\geq3$, then
    \[
    \mathrm{wt}\bigl(\mathrm{RM}_3(2m-4,m)\bigr)\supseteq\{0,9,12,15,18,\ldots,3^m\}.
    \]
\end{prop}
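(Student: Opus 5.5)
We argue by induction on $m$, using Proposition~\ref{prop:spectrum-inclusion} in the same way as in the proof of Proposition~\ref{prop:rm3-spectrum}. Write $S_m=\{0,9,12,15,18,\ldots,3^m\}$, that is, $0$ together with all multiples of $3$ from $9$ to $3^m$.

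For $m=3$ the code is $\mathrm{RM}_3(2,3)$. By \cite{Vance2000} its spectrum is exactly $\{0,9,12,15,18,21,24,27\}=S_3$, so the inclusion holds with equality.

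For the inductive step, assume $\mathrm{wt}(\mathrm{RM}_3(2m-4,m))\supseteq S_m$ for some $m\ge 3$. We apply Proposition~\ref{prop:spectrum-inclusion} with $q=3$, with $m+1$ in place of $m$, and with $r=3$. The hypothesis $0\le r\le 2m-1$ holds for $m\ge 2$. The proposition gives
\[
\mathrm{wt}\bigl(\mathrm{RM}_3(2m-2,m+1)\bigr)\supseteq\bigoplus_{i=1}^3\mathrm{wt}\bigl(\mathrm{RM}_3(2m-4,m)\bigr)\supseteq\bigoplus_{i=1}^3 S_m .
\]
Since $2(m+1)-4=2m-2$, it remains to show $S_{m+1}\subseteq\bigoplus_{i=1}^3 S_m$. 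Take $x=3k$ with $9\le x\le 3^{m+1}$ and split into three cases.
\begin{itemize}
\item If $x\le 3^m$, then $x=x+0+0$.
\item If $3^m<x\le 2\cdot 3^m$, write $x=y+z+0$ with $y,z\in S_m$. Use $y=3^m$ and $z=x-3^m$ when $z\ge 9$. For the residual values $x-3^m\in\{3,6\}$, use $y=3^m-9$ and $z=x-3^m+9\in\{12,15\}$; here $3^m-9\ge 18$ because $m\ge 3$, and $3^m-9$ is a multiple of $3$.
\item If $2\cdot 3^m<x\le 3^{m+1}$, write $x=3^m+3^m+(x-2\cdot3^m)$. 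The residual values $x-2\cdot3^m\in\{3,6\}$ are handled by the same shift, $x=3^m+(3^m-9)+(x-2\cdot3^m+9)$.
\end{itemize}
In every case each summand lies in $S_m$, so $S_{m+1}\subseteq\bigoplus_{i=1}^3 S_m$, which completes the induction.

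The main point to check is the small residues $3$ and $6$ that arise near $3^m$ and $2\cdot 3^m$, since neither lies in $S_m$. The fix above works because $3^m-9$ is itself an admissible weight of $\mathrm{RM}_3(2m-4,m)$. This requires $3^m-9\ge 9$, which holds for $m\ge 3$.

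Only the inclusion is claimed here. Excluding other weights, such as non-multiples of $3$ that may occur for larger $m$, is a separate matter.
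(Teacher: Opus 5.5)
Your proof is correct and follows the same route as the paper. Both argue by induction from the spectrum of $\mathrm{RM}_3(2,3)$, apply the sumset inclusion with $r=3$, and check that the threefold sumset of the level-$m$ weights covers every multiple of $3$ from $9$ to $3^{m+1}$. The only difference is cosmetic: the paper divides by $3$ and shows the threefold sumset of $\{0,3,4,\ldots,N\}$ with $N=3^{m-1}$ contains everything, writing $N+1=3+(N-2)$ and $N+2=3+(N-1)$, whereas you handle the residues $3$ and $6$ directly by shifting through $3^m-9$.
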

\begin{proof}
    The case $m=3$ follows from the spectrum above. Assume that the inclusion holds for some $m\geq3$, and put
    $N=3^{m-1}$ and $B=\{0,3,4,\ldots,N\}$, so that the asserted weights at level $m$ form $3B=\{3b:b\in B\}$.
    Since $N\geq9$, the sum of three copies of $B$ is $\{0,3,4,\ldots,3N\}$.
    Indeed, weights up to $N$ use one nonzero summand; $N+1$ and $N+2$ are $3+(N-2)$ and $3+(N-1)$; weights from $N+3$ to $2N$ use $N+(x-N)$.
    The remaining weights are obtained by adding $N$ to these decompositions, using a third summand when necessary. No sum is $1$ or $2$.
    Multiplying by $3$ and applying Proposition~\ref{prop:spectrum-inclusion} proves the inclusion at level $m+1$.
\end{proof}
By Propositions~\ref{prop:minimum-distance} and~\ref{prop:second-weight}, the first two positive weights of $\mathrm{RM}_3(2m-4,m)$ are $9$ and $12$, respectively.
To determine the spectrum for $m=4$, it therefore remains to decide which integers $w$ with $13\leq w\leq80$ and $3\nmid w$ occur. We first exclude weights $13$ and $14$ and then establish the occurrence of all larger admissible weights.

\begin{prop}\label{prop:no 13,14}
	If $m\geq6$, then $13,14\notin\mathrm{wt}(\mathrm{RM}_3(2m-4,m))$.
\end{prop}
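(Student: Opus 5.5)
The plan is to argue by contradiction and induction on $m$. Suppose $f$ is reduced, $\deg f\le 2m-4$, and $|f|=w\in\{13,14\}$, and let $S$ be its support. The first step is to find an affine hyperplane disjoint from $S$. For $m\ge7$ we have $|S|\le14<2m+1$, so Proposition~\ref{prop:intersect} gives such a hyperplane. After an affine change of variables it is $x_1=a$. Then Proposition~\ref{prop:fx=0} gives $f=(x_1-a)g$ with $\deg g\le 2m-5$ and $\deg_{x_1}g\le1$.

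For each $b\ne a$, the slice $f(b,\cdot)=(b-a)g(b,\cdot)$ is a codeword of $\mathrm{RM}_3(2m-5,m-1)=\mathrm{RM}_3(2(m-1)-3,m-1)$. Proposition~\ref{prop:rm3-spectrum} says its weight lies in $\{0,6,8,9,\dots\}$. Since $w$ is the sum of the two slice weights, the only splittings are $w=w+0$ for both values, and $14=6+8$.

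\emph{Case $w+0$.} Here $S$ lies in a single hyperplane $x_1=b$. Applying Proposition~\ref{prop:fx=0} twice, for the two other values of $x_1$, gives $f=(1-(x_1-b)^2)h$ with $h$ reduced, free of $x_1$, and $\deg h\le 2m-6=2(m-1)-4$. Thus $h$ is a codeword of $\mathrm{RM}_3(2(m-1)-4,m-1)$ of weight $w$. This contradicts the induction hypothesis once $m-1\ge6$, so the induction starts at $m=6$.

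\emph{Case $14=6+8$.} One slice is a minimum-weight codeword of $\mathrm{RM}_3(2(m-1)-3,m-1)$, and the other is a second-weight codeword. By Proposition~\ref{prop:minimun weight codeword}, the first has support consisting of $6$ points of an affine plane. By Proposition~\ref{prop:second weight codeword}, with $t=m-3$, the second has support consisting of the $8$ points $x_tx_{t+1}x_{t+2}\ne0$ of a $3$-flat. Writing $g=g_0+x_1g_1$ with $\deg g_1\le 2m-6$, the difference of the two slices is a multiple of $g_1$. It is therefore a codeword of $\mathrm{RM}_3(2(m-1)-4,m-1)$, so it is $0$ or has weight at least $9$. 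It cannot be $0$, because then the two slices would be scalar multiples of each other, but their weights are $6\ne8$. So the supports must be positioned so that their symmetric-difference pattern has weight at least $9$ and is a codeword of that lower-order code.

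To finish this case, I would use that $S$ is contained in a flat $F$ of dimension at most $1+2+3+1=7$, independent of $m$, spanned by $x_1$, the plane and the $3$-flat. A function vanishing off a flat of codimension $m-k$ factors as $\prod(1-\ell_i^2)\cdot h$ over $m-k$ independent linear forms $\ell_i$ defining $F$. Its restriction to $F$ is then a codeword of $\mathrm{RM}_3(2k-4,k)$ with $k=\dim F$. The key point is that this holds for $k=m$ itself, because $2m-4-2(m-k)=2k-4$. This reduces the $6+8$ case to a bounded configuration: a plane and a $3$-flat in parallel slices. That configuration can be eliminated explicitly, either by hand or by the Z3-style exact constraint check the paper describes.

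The main obstacle is the base case $m=6$, since there $2m+1=13\le|S|$ and Proposition~\ref{prop:intersect} no longer guarantees an avoided hyperplane. For a support of size $13$ or $14$ in $\mathbb F_3^6$ that meets every hyperplane, I would first try the following. Take a hyperplane $H$ with $|S\cap H|$ minimal and consider the three parallel slices. Each slice lies in $\mathrm{RM}_3(2\cdot5-4,5)=\mathrm{RM}_3(6,5)$, whose spectrum begins $\{0,3,4,6,\dots\}$. Combined with averaging over parallel classes, this restricts the slice patterns to a few triples. Each such triple then has to be ruled out by the exact constraint computation rather than by theory. I expect this to be the step where the argument is computational, and the most delicate one.
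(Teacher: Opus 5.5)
The set-up of your induction step matches the paper's: an avoided hyperplane for $m\ge7$, two slices in $\mathrm{RM}_3(2m-5,m-1)$, and the $w+0$ case killed by the induction hypothesis. The $6+8$ case, which is the only substantive one, is not closed.

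\textbf{The $6+8$ case.} Your observation that a combination of the slices is a nonzero multiple of $g_1$, hence of weight at least $9$, gives no contradiction. That combination may have any weight up to $14$. The final elimination is then deferred ``by hand or by Z3''. Moreover, your bound $\dim F\le 1+2+3+1=7$ counts the $x_1$-direction twice. With $7$, the configuration lands in $\mathrm{RM}_3(10,7)$, which no available computation covers.

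The paper closes this case without computation. Put $n=m-1$ and $d=2n-3$. Since $|u|,|v|<9$, the minimum distance of $\mathrm{RM}_3(d-1,n)$, both slices have degree exactly $d$. Their sum has degree $<d$, so the top homogeneous components satisfy $u_d=-v_d$. The space $K(h)$ of linear forms $L$ with $\deg\overline{Lh}\le d$ depends only on $h_d$, and its dimension is invariant under affine changes of variables. The normal forms for minimum-weight and second-weight codewords give $\dim K(u)=n-2$ and $\dim K(v)=n-3$, which contradicts $K(u)=K(v)$.

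\textbf{A cheap repair of your own route.} The affine span of a plane in one slice and a $3$-flat in a parallel slice has dimension at most $2+3+1=6$. Since $m\ge7$, $f$ vanishes off some $6$-flat. Your factorization $\prod(1-\ell_i^2)\,h$ then makes $h$ a weight-$14$ codeword of $\mathrm{RM}_3(2\cdot6-4,6)=\mathrm{RM}_3(8,6)$, contradicting the base case. This trades the paper's computation-free top-degree argument for a second use of the $m=6$ computation.

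\textbf{The base case $m=6$.} Your sketch is flawed. Restrictions of a codeword of $\mathrm{RM}_3(8,6)$ to hyperplanes lie in $\mathrm{RM}_3(8,5)$, not in $\mathrm{RM}_3(6,5)$. Also, $\mathrm{RM}_3(6,5)$ has minimum distance $9$, so its spectrum does not begin $\{0,3,4,6,\dots\}$. The proposed restriction to ``a few triples'' of slice weights is therefore unjustified as stated. The paper settles $m=6$ by a direct exact search over all supports of affine dimension $3\le d\le 6$, with no slicing.
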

\begin{proof}
	We proceed by induction on $m$. For $m=6$, an exact constraint computation implemented
	in Python using Z3~\cite{Z3} verifies that
	$13,14\notin\mathrm{wt}(\mathrm{RM}_3(8,6))$. Now assume that the assertion holds for some $m-1\ge6$.
	
	Let $f\in\mathrm{RM}_3(2m-4,m)$ satisfy $|f|\in\{13,14\}$, and let $S$ be its support. In this induction step $m\geq7$, so $|S|\leq14<2m+1$. Proposition~\ref{prop:intersect} therefore yields an affine hyperplane $H_0$ disjoint from $S$. Since invertible affine substitutions preserve GRM codes, we may assume that $H_0=\{\boldsymbol{x}:x_1=0\}$.
    By Proposition~\ref{prop:fx=0}, the reduced polynomial $f$ has the form
    \[
    f(\boldsymbol{x})=x_1g_1(\bar{\boldsymbol{x}})+x_1^2g_2(\bar{\boldsymbol{x}}),
    \bar{\boldsymbol{x}}=(x_2,\ldots,x_m),
    \]
    where $\deg(g_1)\leq2m-5$ and $\deg(g_2)\leq2m-6$.
    On the two remaining parallel hyperplanes $H_1=\{\boldsymbol{x}:x_1=1\}$ and $H_2=\{\boldsymbol{x}:x_1=-1\}$, the restrictions of $f$ are
    \[
    u(\bar{\boldsymbol{x}})=g_1(\bar{\boldsymbol{x}})+g_2(\bar{\boldsymbol{x}}),
    v(\bar{\boldsymbol{x}})=-g_1(\bar{\boldsymbol{x}})+g_2(\bar{\boldsymbol{x}}),
    \]
    respectively. Thus $|f|=|u|+|v|$, and $u,v\in\mathrm{RM}_3(2m-5,m-1)$. By Proposition~\ref{prop:rm3-spectrum},
    \[
    \mathrm{wt}\bigl(\mathrm{RM}_3(2m-5,m-1)\bigr)
    =\{0,6,8,9,\ldots,3^{m-1}\}.
    \]

    \noindent\textbf{Case 1: $|u|\neq0$ and $|v|\neq0$.}
	
	If $|f|=|u|+|v|=13$, then no such pair of nonzero weights exists. If $|f|=|u|+|v|=14$, then $(|u|,|v|)=(6,8)$ or $(8,6)$. We first consider $|u|=6$ and $|v|=8$. Put $n=m-1$ and $d=2m-5=2n-3$. Since the minimum distance of $\mathrm{RM}_3(d-1,n)$ is $9$, neither $u$ nor $v$ belongs to this code. Thus $\deg u=\deg v=d$. Moreover, $u+v=2g_2$ has degree at most $d-1$, so $u_d=-v_d$, where $h_d$ denotes the homogeneous component of degree $d$ of a reduced polynomial $h$.
	
	To show that this equality is impossible, let $\overline{p}$ denote the reduced representative of a polynomial function $p$ on $\mathbb{F}_3^n$, obtained using $y_i^3=y_i$. For a reduced polynomial $h$ of degree $d$, define the vector space
	\[
	K(h)=\left\{L\in\operatorname{span}_{\mathbb{F}_3}\{y_1,\ldots,y_n\}:\deg\overline{Lh}\leq d\right\}.
	\]
	Since multiplication by $L$ increases degree by at most one and reduction never increases degree, $K(h)$ depends only on $h_d$. In particular, $u_d=-v_d$ implies $K(u)=K(v)$.
	
	Furthermore, $\dim K(h)$ is invariant under invertible affine changes of variables. Indeed, let $T(\boldsymbol{y})=B\boldsymbol{y}+\boldsymbol{b}$ with $B$ invertible. Affine substitution followed by reduction preserves the degree filtration, and its inverse does so as well. Hence, for every homogeneous linear form $L$,
	\[
	\deg\overline{Lh}\leq d
	\Longleftrightarrow
	\deg\overline{(L\circ T)(h\circ T)}\leq d.
	\]
	The constant term of $L\circ T$ contributes a polynomial of degree at most $d$, so it does not affect this condition. Thus $L\mapsto L(B\boldsymbol{y})$ gives an isomorphism from $K(h)$ to $K(\overline{h\circ T})$.
	
	By Proposition~\ref{prop:minimun weight codeword}, after an affine change of variables the degree-$d$ component of $u$ is
	\[
	c\,y_1^2\cdots y_{n-2}^2y_{n-1}, c\in\mathbb{F}_3^*.
	\]
	Multiplication by any of $y_1,\ldots,y_{n-2}$ produces a cubic factor, whose reduction lowers the degree from $d+1$ to $d-1$. Multiplication by $y_{n-1}$ or $y_n$ instead produces two distinct reduced monomials of degree $d+1$, so their coefficients cannot cancel. Therefore, in these coordinates,
	\[
	K(u)=\operatorname{span}_{\mathbb{F}_3}\{y_1,\ldots,y_{n-2}\},
	\dim K(u)=n-2.
	\]
	By Proposition~\ref{prop:second weight codeword}, after a possibly different affine change of variables the degree-$d$ component of $v$ is
	\[
	\alpha\,y_1^2\cdots y_{n-3}^2y_{n-2}y_{n-1}y_n,
    \alpha\in\mathbb{F}_3^*.
	\]
	Here multiplication by $y_1,\ldots,y_{n-3}$ lowers the degree after reduction, whereas multiplication by $y_{n-2},y_{n-1},y_n$ produces three distinct reduced monomials of degree $d+1$. Consequently, in these coordinates,
	\[
	K(v)=\operatorname{span}_{\mathbb{F}_3}\{y_1,\ldots,y_{n-3}\},
	\dim K(v)=n-3.
	\]
	By affine invariance, these dimensions also hold in the original coordinates, contradicting $K(u)=K(v)$. The case $(|u|,|v|)=(8,6)$ follows by interchanging $u$ and $v$. Therefore, $|f|$ cannot be $13$ or $14$ in Case 1.
	
	\noindent\textbf{Case 2: exactly one of $|u|$ and $|v|$ is zero.}
	
	Suppose first that $|v|=0$. Since $v$ is reduced, it is the zero polynomial, and hence $g_1=g_2$. Therefore $u=2g_2$ has degree at most $2m-6$, so
    \[
    u\in\mathrm{RM}_3(2m-6,m-1)=\mathrm{RM}_3(2(m-1)-4,m-1).
    \]
    This contradicts the induction hypothesis because $|u|=|f|\in\{13,14\}$. If $|u|=0$, then $g_1=-g_2$ and $v=2g_2$, giving the same contradiction. The two cases exhaust all possibilities, completing the induction.
\end{proof}

\begin{prop}\label{prop:15}
	For every $m\geq3$, the third weight of $\mathrm{RM}_3(2m-4,m)$ is $15$.
\end{prop}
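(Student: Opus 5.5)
The plan is to prove two separate facts: the weight $15$ occurs, and no weight strictly between $12$ and $15$ occurs. Together with the fact recorded after Proposition~\ref{prop:ternary-multiples} that the first two positive weights of $\mathrm{RM}_3(2m-4,m)$ are $9$ and $12$, these show that the third weight is $15$.

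Occurrence is immediate. For every $m\geq3$, Proposition~\ref{prop:ternary-multiples} gives $15\in\{0,9,12,15,18,\ldots,3^m\}\subseteq\mathrm{wt}(\mathrm{RM}_3(2m-4,m))$, since $15\leq 27\leq 3^m$. So it only remains to exclude $13$ and $14$ for every $m\geq3$.

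For $m\geq6$ this exclusion is exactly Proposition~\ref{prop:no 13,14}. For $m=3$ it follows from the known spectrum $\{0,9,12,15,\ldots,27\}$ of $\mathrm{RM}_3(2,3)$ from \cite{Vance2000}. That leaves the two cases $m=4$ and $m=5$, i.e.\ the codes $\mathrm{RM}_3(4,4)$ and $\mathrm{RM}_3(6,5)$.

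The cases $m=4,5$ are the main obstacle, since the induction step of Proposition~\ref{prop:no 13,14} does not apply there. That step needs $14<2m+1$ so that Proposition~\ref{prop:intersect} supplies a hyperplane avoiding the support, and this holds only for $m\geq7$. For these two codes I would use the same exact constraint computation as for the $m=6$ base case.
\begin{itemize}
\item The unknowns are the coefficients of a reduced polynomial of degree at most $2m-4$ over $\mathbb{F}_3$.
\item For each of the $3^m$ points there is a Boolean indicating that the evaluation is nonzero.
\item The constraints require exactly $13$ (respectively $14$) of these Booleans to be true, and Z3 should report the system unsatisfiable.
\end{itemize}
Affine symmetry can shrink the search. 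Any codeword of weight $13$ or $14$ meets some hyperplane in a prescribed pattern, and one can normalize, for example, $f(\boldsymbol 0)\neq0$.

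If the computation for $m=5$ turns out to be too heavy, I would try the hyperplane-splitting argument of Proposition~\ref{prop:no 13,14} in a weaker form. Choose a direction $x_1$ for which the three parallel slices $H_0,H_1,H_2$ have restrictions $f_0,f_1,f_2\in\mathrm{RM}_3(2m-5,m-1)$ of weights summing to $13$ or $14$. Proposition~\ref{prop:rm3-spectrum} then forces each slice weight to lie in $\{0,6,8,9,\ldots\}$. This leaves only the patterns with one zero slice, which reduce to $m-1$ by the Case~2 argument after translating the empty slice to $x_1=0$, and the pattern $\{6,8,0\}$, which is killed by the $K(h)$-dimension argument of Case~1.

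Hence $13,14\notin\mathrm{wt}(\mathrm{RM}_3(2m-4,m))$ for all $m\geq3$, and the third weight is $15$.
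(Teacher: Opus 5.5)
Your main argument is essentially the paper's proof. Weight $15$ occurs by Proposition~\ref{prop:ternary-multiples}, the case $m=3$ comes from the spectrum of $\mathrm{RM}_3(2,3)$, Proposition~\ref{prop:no 13,14} handles $m\geq6$, and exact Z3 computations exclude $13$ and $14$ from $\mathrm{RM}_3(4,4)$ and $\mathrm{RM}_3(6,5)$. The paper sets up those searches differently: it uses unknown support points and values, subject to orthogonality to the dual code $\mathrm{RM}_3(3,m)$, split by affine dimension. This is far smaller than your coefficient-plus-indicator encoding and answers your worry about $m=5$ being too heavy.

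Your fallback for $m=5$, however, would not work as stated. Restricting $f\in\mathrm{RM}_3(2m-4,m)$ to a hyperplane $x_1=a$ gives in general only an element of $\mathrm{RM}_3(2m-4,m-1)$, whose minimum distance is $3$. All three slices drop to degree at most $2m-5$ exactly when every monomial of degree $2m-4$ in the reduced $f$ involves $x_1$. The paper obtains this only from an empty slice, via Propositions~\ref{prop:intersect} and~\ref{prop:fx=0}. For $m=5$ the support has $13$ or $14>2m$ points, so no such direction is guaranteed, and you give no other reason one exists. Without it, splittings such as $3+4+6$ are left untreated. This is precisely why the paper relies on computation at $m=4,5$ (and $6$) rather than on the hyperplane argument.
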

\begin{proof}
	For $m=3$, the assertion follows from the spectrum of $\mathrm{RM}_3(2,3)$ given above. Exact constraint computations implemented in Python with Z3~\cite{Z3} exclude weights $13$ and $14$ from both $\mathrm{RM}_3(4,4)$ and $\mathrm{RM}_3(6,5)$; see Section~\ref{sec:computational}. Proposition~\ref{prop:no 13,14} excludes these weights for all $m\geq6$. Since the first two positive weights are $9$ and $12$, and weight $15$ occurs by Proposition~\ref{prop:ternary-multiples}, it follows that the third weight is $15$ for every $m\geq3$.
\end{proof}

\begin{prop}
		If $m\ge 4$, then
    \[
    \mathrm{wt}\big(\mathrm{RM}_3(2m-4,m)\big)=\{0,\,9,\,12,\,15,\,16,\,17,\,\dots,\,3^m\}.
    \]
\end{prop}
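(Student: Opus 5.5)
The plan is an induction on $m$. The base case $m=4$ is settled computationally, and the induction step uses the sumset inclusion of Proposition~\ref{prop:spectrum-inclusion}. The upper bound comes from results already available. By Propositions~\ref{prop:minimum-distance} and~\ref{prop:second-weight}, the first two positive weights are $9$ and $12$, and by Proposition~\ref{prop:15} the third weight is $15$. Hence no weight in $\{1,\dots,8,10,11,13,14\}$ occurs, and trivially no weight exceeds $3^m$. So for every $m\geq4$ the spectrum is contained in $\{0,9,12,15,16,\dots,3^m\}$, and only the reverse inclusion remains.

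\textbf{Base case $m=4$.} Proposition~\ref{prop:ternary-multiples} already supplies all multiples of $3$ from $9$ to $81$. We must still exhibit codewords of $\mathrm{RM}_3(4,4)$ of each weight $w$ with $16\leq w\leq 80$ and $3\nmid w$. I would do this by a Magma computation, as the paper does for its other base cases. A randomized search over degree-$4$ reduced polynomials in four variables should quickly hit every such weight, and the search records explicit witness polynomials. This avoids a full enumeration of the roughly $3^{55}$ codewords.

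If some weights near $16$, or near $80$, prove elusive at random, I would construct them by hand. One option is products of the form $(1-x_1^2)\,h(x_2,x_3,x_4)$ with $\deg h\leq 2$, which realize weights $3\cdot w(\mathrm{RM}_3(2,3))$ and are not what we need. Better are sums $x_1 g_1+x_1^2g_2$, whose weight is $|u|+|v|$ with $u,v\in \mathrm{RM}_3(3,3)$ subject to the degree coupling of Case 1 in Proposition~\ref{prop:no 13,14}. Alternatively, one can take $f$ whose restrictions to the three parallel hyperplanes $x_1=a$ have prescribed weights in $\mathrm{wt}(\mathrm{RM}_3(3,3))$. This comes from the Lagrange interpolation $f=\sum_a (1-(x_1-a)^2)f_a$, which has degree at most $2+\max\deg f_a$. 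That forces $\deg f_a\leq 2$, so the weights obtained are sums of three elements of $\mathrm{wt}(\mathrm{RM}_3(2,3))$, all of which are multiples of $3$. This limitation is exactly why the computation is needed.

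\textbf{Induction step.} Assume the statement for some $m\geq4$ and set $W=\{0,9,12,15,16,\dots,3^m\}$. Proposition~\ref{prop:spectrum-inclusion}, applied with the parameters giving orders $2(m+1)-4$ and $2m-4$, yields $\mathrm{wt}(\mathrm{RM}_3(2m-2,m+1))\supseteq W\oplus W\oplus W$. I would show that this sumset contains every integer in $[16,3^{m+1}]$, using the same interval bookkeeping as in Proposition~\ref{prop:rm3-spectrum}, with $N=3^m\geq 81$:
\begin{itemize}
\item[(i)] values up to $N$ are $x+0+0$;
\item[(ii)] values $N+1,\dots,N+15$ are written as $16+(x-16)$, and $x-16\in[N-15,N-1]\subseteq W$;
\item[(iii)] values from $N+16$ to $2N$ are $N+(x-N)$;
\item[(iv)] values from $2N+1$ to $2N+15$ are $N+16+(x-N-16)$;
\item[(v)] values from $2N+16$ to $3N$ are $N+N+(x-2N)$.
\end{itemize}
The multiples $9,12,15$ are already in $W$. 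Together with the upper bound from the first paragraph, this gives equality at level $m+1$.

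\textbf{Main obstacle.} The only genuinely hard step is the base case $m=4$, specifically certifying that every non-multiple of $3$ in $[16,80]$ occurs. The induction step is routine, and the exclusions are already done in Proposition~\ref{prop:15}. I expect the random Magma search to succeed. Should a few weights resist, I would target them with a Z3 satisfiability query, mirroring Section~\ref{sec:computational}: the unknowns are the coefficients of a degree-$4$ reduced polynomial, and the constraint is that exactly $w$ of the evaluations are nonzero.
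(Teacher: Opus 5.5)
Your proposal follows the paper's proof: the same induction via Proposition~\ref{prop:spectrum-inclusion} (taking $16$ instead of $15$ as the small summand is immaterial), the same exclusions from the first three weights $9,12,15$ via Propositions~\ref{prop:minimum-distance}, \ref{prop:second-weight} and~\ref{prop:15}, and a computer-certified base case $m=4$, where the paper obtains weight $16$ from $x_1x_2x_3x_4$ and each non-multiple of $3$ in $[17,80]$ from an independently checked Z3 witness. One practical correction to your base-case plan: $\mathrm{RM}_3(4,4)$ has dimension $50$ (not about $55$) and dual distance $18$, so the weight of a uniformly random codeword has the same first $17$ moments as $\mathrm{Binomial}(81,2/3)$ and concentrates near $54$; this makes uniform random search hopeless for weights near $16$ or $80$, so the Z3 fallback, or structured witnesses such as $x_1x_2x_3x_4$ (weight $16$) or the norm form of $\mathbb{F}_{81}/\mathbb{F}_3$ (weight $80$), is what has to carry the base case.
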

\begin{proof}
	We proceed by induction on $m$. An exact constraint computation implemented in Python using Z3~\cite{Z3} verifies that
	\[
\mathrm{wt}\big(\mathrm{RM}_3(4,4)\big) = \big\{0,\,9,\,12,\,15,\,16,\,17,\,\dots,\,81\}.
\]
	Thus the assertion holds for $m=4$.
	
	Now assume that the assertion holds for some $m\ge 4$. By Proposition~\ref{prop:spectrum-inclusion},
	\[
	\mathrm{wt}\big(\mathrm{RM}_3(2m-2,\,m+1)\big) \supseteq \bigoplus_{i=1}^3 \big\{0,\,9,\,12,\,15,\,16,\,17,\,\dots,\,3^m\}.
	\]
	Let
    \[
    A=\bigoplus_{i=1}^3 \big\{0,\,9,\,12,\,15,\,16,\,17,\,\dots,\,3^m\}.
    \]
    We claim that
    \[
    A=\big\{0,\,9,\,12,\,15,\,16,\,17,\,\dots,\,3^{m+1}\}.
    \]
	Indeed, the following decompositions cover all the required weights.
\begin{itemize}
\item If \(x=0\), \(x=9\), \(x=12\), or \(15\leq x\leq 3^m\), we may write $x=x+0+0$,
	so \(x\in A\). 
\item If \(x=3^m+a\) with \(1\leq a\leq 14\), we may write $x=15+\bigl(3^m+a-15\bigr)+0$, which shows that \(x\in A\). 

\item If $3^m+15\leq x\leq 2\cdot3^m$,
	we may write $x=3^m+\bigl(x-3^m\bigr)+0$, which shows that \(x\in A\).
\item	If \(x=2\cdot3^m+a\) with \(1\leq a\leq14\), we may write $x=3^m+15+\bigl(3^m+a-15\bigr)$, which shows that \(x\in A\).

\item   If $2\cdot3^m+15\leq x\leq3^{m+1}$,
	we may write $x=3^m+3^m+\bigl(x-2\cdot3^m\bigr)$, which shows that \(x\in A\). 
    \end{itemize}
    Here $15\leq 3^m+a-15\leq 3^m$ for $1\leq a\leq 14$ in the induction range, so every displayed summand belongs to the defining set of $A$.
    Therefore,
	\[
    \{0,9,12,15,16,17,\ldots,3^{m+1}\}\subseteq A.
    \]
	Conversely, every element of \(A\) is at most $3\cdot3^m=3^{m+1}$.
	Moreover, every positive element of $\{0,9,12,15,16,17,\ldots,3^m\}$
	is either \(9\), \(12\), or at least \(15\). Hence a sum of three such elements can never be any of
	$1,2,3,4,5,6,7,8,10,11,13,14$. Thus \[
    A\subseteq\{0,9,12,15,16,17,\ldots,3^{m+1}\}.
    \]
	Consequently, \[
    A=\{0,9,12,15,16,17,\ldots,3^{m+1}\}.
    \]
	
	From Proposition~\ref{prop:minimum-distance}, we have $d_3(2m-2,\,m+1)=9$. From Proposition~\ref{prop:second-weight}, the second weight of $\mathrm{RM}_3(2m-2,\,m+1)$ is $12$. From Proposition~\ref{prop:15}, the third weight of $\mathrm{RM}_3(2m-2,\,m+1)$ is $15$. Hence the weights $1, 2, 3, 4, 5, 6, 7, 8$ and $10, 11, 13, 14$ cannot occur. Since the code has length $3^{m+1}$, no weight can exceed $3^{m+1}$. Together with the above inclusion, this yields
	\[
	\mathrm{wt}\big(\mathrm{RM}_3(2m-2,\,m+1)\big) = \big\{0,\,9,\,12,\,15,\,16,\,17,\,\dots,\,3^{m+1}\}.
	\]
	This completes the proof.
\end{proof}

\subsection{Computational verification}\label{sec:computational}
The ternary computations were implemented in Python using
Z3~\cite{Z3}, a satisfiability modulo theories (SMT) solver.
SMT asks whether a collection of logical constraints admits an
assignment satisfying all of them, with operations interpreted
according to specified mathematical theories, such as integer
arithmetic or fixed-width bit-vector arithmetic. Here, the
existence of a codeword of a prescribed weight is formulated
as such a constraint problem.

Let $C_m=\mathrm{RM}_3(2m-4,m), m\geq 3.$
The duality relation
$C_m^\perp=\mathrm{RM}_3(3,m)$
provides an exact membership test. For a prescribed positive
weight $w$, we seek pairwise distinct points
$P_1,\ldots,P_w\in\mathbb{F}_3^m$ and nonzero values
$a_1,\ldots,a_w\in\mathbb{F}_3^\ast$ satisfying
$\sum_{i=1}^{w} a_i P_i^\alpha=0$
in $\mathbb{F}_3$
for every $\alpha\in\{0,1,2\}^m
\text{ with }|\alpha|\leq 3,$
where
$
|\alpha|=\sum_{j=1}^{m}\alpha_j,
P_i^\alpha=\prod_{j=1}^{m}(P_i)_j^{\alpha_j},
$
and a factor with exponent zero is interpreted as $1$.
The evaluation vectors of these monomials span
$\mathrm{RM}_3(3,m)$. Hence the displayed equations are
equivalent to orthogonality to every vector in $C_m^\perp$.
Since $(C_m^\perp)^\perp=C_m$, they are necessary and sufficient
for the vector taking value $a_i$ at $P_i$ and zero elsewhere
to belong to $C_m$. Pairwise distinctness of the points and
nonvanishing of the values ensure that its weight is exactly $w$.

The search is divided according to the affine dimension $d$
of the support, namely the dimension of the smallest affine
subspace containing it. For a support of size $w$, necessarily
$
\lceil\log_3 w\rceil
\leq d\leq \min\{m,w-1\}.
$
Indeed, an affine subspace of dimension $d$ contains $3^d$
points, and $w$ points span an affine subspace of dimension
at most $w-1$. A support of affine dimension $d$ contains
$d+1$ affinely independent points. Since invertible affine
changes of coordinates preserve GRM codes and Hamming
weights, these points may be mapped to
$
0,\boldsymbol{e}_1,\ldots,\boldsymbol{e}_d,
$
with the entire support contained in
$\mathbb{F}_3^d\times\{0\}^{m-d}$, where
$\boldsymbol{e}_j$ is the $j$th standard basis vector.
Multiplying the codeword by a nonzero scalar then makes
its value at $0$ equal to $1$. Only the remaining support
points are sorted, with their values relabeled accordingly.
These normalizations remove redundant descriptions without
excluding any possible weight.

The constraints are encoded using bit vectors, which are
fixed-length strings of binary digits representing bounded
integers. Field elements are represented by $0,1,2$, and
the encoding implements arithmetic modulo $3$ explicitly.
Bit-vector arithmetic itself wraps around modulo a power
of $2$, so sufficient widths are required for all intermediate
products and sums before reduction modulo $3$.
A time limit of $3600$ seconds was imposed on each
weight--dimension pair.

A solver response \texttt{SAT} means that the encoded
constraints admit a satisfying assignment, that is, a choice
of the points and values satisfying all constraints.
Such an assignment supplies a candidate witness: an explicit
codeword intended to establish the occurrence of the requested
weight. Each candidate is checked independently of the
bit-vector encoding using integer arithmetic with reduction
modulo $3$. The checks verify that the support points are
distinct, that their assigned values are nonzero, and that
all dual constraints hold. The corresponding reduced
polynomial is also reconstructed by interpolation, and its
total degree, evaluations, and Hamming weight are verified.
In particular, this polynomial can be written explicitly as
\[
f(x_1,\ldots,x_m)
=
\sum_{i=1}^{w} a_i
\prod_{j=1}^{m}
\bigl(1-(x_j-(P_i)_j)^2\bigr).
\]
Each product is the indicator function of $P_i$, so $f$
has the prescribed evaluations and is reduced.
Checking that $\deg(f)\leq 2m-4$ therefore gives a direct
verification of membership in $C_m$.

A response \texttt{UNSAT} means that the encoded constraints
have no satisfying assignment. A weight is excluded only
when every possible affine dimension returns \texttt{UNSAT}.
A response \texttt{UNKNOWN}, or termination without a decision
because of the time limit, is inconclusive. The independent
witness checks certify the reported existence results;
the nonexistence results rely on the completeness and
correctness of the encoding together with Z3's
unsatisfiability decisions. The retained records include
the scripts, constraint instances in SMT-LIB format
(a standard input language for SMT solvers), structured
result files in JSON format, and explicit witnesses
for the satisfiable cases.

For each $w\in\{13,14\}$ and each $m\in\{4,5,6\}$,
all cases $3\leq d\leq m$ returned \texttt{UNSAT}.
These dimensions exhaust the possibilities, since
an affine subspace of dimension at most $2$ contains
at most $3^2=9<13$ points. Thus the computations exclude
weights $13$ and $14$ from
$\mathrm{RM}_3(4,4)$, $\mathrm{RM}_3(6,5)$, and
$\mathrm{RM}_3(8,6)$.

For $C_4=\mathrm{RM}_3(4,4)$, independently verified
witnesses were obtained for all $43$ weights in
\[
\{w\in\mathbb{Z}:17\leq w\leq 80,\ 3\nmid w\}.
\]
Weight $16$ is realized directly by $x_1x_2x_3x_4$,
which has degree $4$ and is nonzero at exactly
$2^4=16$ points of $\mathbb{F}_3^4$.
Together with the previously established inclusion
\[
\{0,9,12\}
\cup
\{3k:k\in\mathbb{Z},\ 5\leq k\leq 27\}
\subseteq \mathrm{wt}(C_4),
\]
these witnesses establish the occurrence of every weight
in $\{0,9,12,15,16,\ldots,81\}$.
The minimum distance $9$ and second weight $12$ exclude
weights $1,\ldots,8,10,11$, while the above
unsatisfiability results exclude $13$ and $14$.
Consequently,
\[
\mathrm{wt}\bigl(\mathrm{RM}_3(4,4)\bigr)
=
\{0,9,12,15,16,\ldots,81\}.
\]

\section{Quaternary GRM codes}\label{sec:quaternary}
\subsection{The weight spectrum of \texorpdfstring{$\mathrm{RM}_4(3m-2,m)$}{RM4(3m-2,m)}}
\begin{prop}
	If $m\geq1$, then
    \[
    \mathrm{wt}\bigl(\mathrm{RM}_4(3m-2,m)\bigr)
    =\{0\}\cup\{w\in\mathbb{Z}:3\leq w\leq4^m\}.
    \]
\end{prop}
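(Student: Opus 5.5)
We argue by induction on $m$, following the pattern of Proposition~\ref{prop:rm3-spectrum}. The upper containment is the easy direction. By Proposition~\ref{prop:minimum-distance} with $q=4$ and $r=3m-2=(m-1)\cdot3+1$ (so $Q=m-1$, $S=1$), the minimum distance is $d_4(3m-2,m)=(4-1)4^{0}=3$. Hence weights $1$ and $2$ do not occur. No weight exceeds the length $4^m$. So it suffices to show that every $w$ with $3\le w\le 4^m$ occurs.

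For $m=1$ the code is $\mathrm{RM}_4(1,1)$, i.e.\ polynomials $ax+b$ on $\mathbb{F}_4$. These give weight $0$, weight $4$ (nonzero constants) and weight $3$ (nonconstant linear polynomials), so the spectrum is $\{0,3,4\}$, as claimed.

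The base case $m=1$ is too weak to start the induction. In $\bigoplus_{i=1}^4\{0,3,4\}$ the value $5$ is not a sum of $3$'s and $4$'s, and the smallest values available after $4$ are $6,7,8$. I would therefore take $m=2$ as a second base case, $\mathrm{wt}(\mathrm{RM}_4(4,2))=\{0,3,4,\dots,16\}$. This would be obtained by a Magma computation, as for the other base cases in the paper. Alternatively one can exhibit a degree-$\le4$ reduced polynomial in two variables of weight $5$ and combine it with the sums of $3$'s and $4$'s coming from Proposition~\ref{prop:spectrum-inclusion}; those sums already give $0,3,4,6,7,\dots,16$. I expect this weight-$5$ verification to be the only genuine obstacle; everything else is arithmetic.

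For the induction step, assume the claim for some $m\ge2$. Apply Proposition~\ref{prop:spectrum-inclusion} with $q=4$, $r=1$ and $m+1$ variables; this is admissible since $0\le1\le 3m-1$. It gives
\[
\mathrm{wt}\bigl(\mathrm{RM}_4(3(m+1)-2,m+1)\bigr)\supseteq\bigoplus_{i=1}^4 B,\qquad B=\{0,3,4,\dots,4^m\}.
\]
Set $N=4^m\ge16$. Every $x$ with $3\le x\le 4N$ decomposes over $B$ as follows:
\begin{itemize}
\item for $x\le N$, take $x+0+0+0$;
\item for $x=kN+a$ with $1\le k\le3$ and $a\in\{1,2\}$, take $k-1$ copies of $N$ together with $3+(N+a-3)$, where $3\le N+a-3\le N$;
\item for $kN+3\le x\le(k+1)N$, take $k$ copies of $N$ together with $x-kN$.
\end{itemize}
These cases cover all of $\{3,\dots,4^{m+1}\}$. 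Combined with the minimum-distance exclusion of $1$ and $2$ and the length bound, this gives the spectrum at level $m+1$ and completes the induction.
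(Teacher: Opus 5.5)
Your proposal is correct and follows the paper's proof essentially step for step. Both arguments use the hand-checked case $m=1$ together with a Magma computation of $\mathrm{wt}(\mathrm{RM}_4(4,2))$ as base cases. Both then run the induction through Proposition~\ref{prop:spectrum-inclusion} with the same decompositions over $\{0,3,4,\dots,4^m\}$, and exclude weights $1,2$ via the minimum distance $3$.
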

\begin{proof}
	We proceed by induction on $m$. A nonzero affine polynomial in one variable is either constant, with weight $4$, or has exactly one zero, with weight $3$. Hence $\mathrm{wt}(\mathrm{RM}_4(1,1))=\{0,3,4\}$. A computation in Magma~\cite{magma} gives
	\[
\mathrm{wt}\big(\mathrm{RM}_4(4,2)\big) = \big\{0,\,3,\,4,\,5,\,\dots,\,16\big\}.
\]
	Thus the assertion holds for $m=1,2$.
	
	Now assume that the assertion holds for some $m\ge 2$. By Proposition~\ref{prop:spectrum-inclusion},
	\[
	\mathrm{wt}\big(\mathrm{RM}_4(3m+1,\,m+1)\big) \supseteq \bigoplus_{i=1}^4 \big\{0,\,3,\,4,\,5,\,\dots,\,4^m\big\}.
	\]
	Let
    \[
    A=\bigoplus_{i=1}^4 \big\{0,\,3,\,4,\,5,\,\dots,\,4^m\big\}.
    \]
    We claim that
    \[
    A=\big\{0,\,3,\,4,\,5,\,\dots,\,4^{m+1}\big\}.
    \]
	Indeed, the following decompositions cover all the required weights.
\begin{itemize}
\item If $x=0$ or $3\leq x\leq 4^m$, we may write $x=x+0+0+0$, so $x\in A$.
\item If $x=4^m+a$ with $1\leq a\leq 2$, we may write $x=3+\bigl(4^m+a-3\bigr)+0+0$, which shows that $x\in A$.
\item If $4^m+3\leq x\leq 2\cdot4^m$, we may write $x=4^m+\bigl(x-4^m\bigr)+0+0$, which shows that $x\in A$.
\item If $x=2\cdot4^m+a$ with $1\leq a\leq 2$, we may write $x=4^m+3+\bigl(4^m+a-3\bigr)+0$, which shows that $x\in A$.
\item If $2\cdot4^m+3\leq x\leq 3\cdot4^m$, we may write $x=4^m+4^m+\bigl(x-2\cdot4^m\bigr)+0$, which shows that $x\in A$.
\item If $x=3\cdot4^m+a$ with $1\leq a\leq 2$, we may write $x=4^m+4^m+3+\bigl(4^m+a-3\bigr)$, which shows that $x\in A$.
\item If $3\cdot4^m+3\leq x\leq 4^{m+1}$, we may write $x=4^m+4^m+4^m+\bigl(x-3\cdot4^m\bigr)$, which shows that $x\in A$.
\end{itemize}
	Here $m\geq2$ ensures that $3\leq4^m+a-3\leq4^m$ whenever $1\leq a\leq2$, so all the summands above belong to the defining set of $A$.
	Therefore, $\{0,3,4,5,\ldots,4^{m+1}\}\subseteq A$.
	Conversely, every element of $A$ is at most $4\cdot4^m=4^{m+1}$. Moreover, every positive summand is at least $3$, so none of $1,2$ can belong to $A$. Thus $A\subseteq\{0,3,4,5,\ldots,4^{m+1}\}$.
	As a consequence, $A=\{0,3,4,5,\ldots,4^{m+1}\}$.
	
	From Proposition~\ref{prop:minimum-distance}, we have $d_4(3m+1,\,m+1)=3$. Hence the weights $1, 2$ cannot occur. Since the code has length $4^{m+1}$, no weight can exceed $4^{m+1}$. Together with the above inclusion, this yields
	\[
	\mathrm{wt}\big(\mathrm{RM}_4(3m+1,\,m+1)\big) = \big\{0,\,3,\,4,\,5,\,\dots,\,4^{m+1}\big\}.
	\]
	This completes the proof.
\end{proof}

\subsection{The weight spectrum of \texorpdfstring{$\mathrm{RM}_4(3m-3,m)$}{RM4(3m-3,m)}}
\begin{prop}
	If $m\ge2$, then
    \[
    \mathrm{wt}\big(\mathrm{RM}_4(3m-3,m)\big)=\{0,\,4,\,6,\,7,\,\dots,\,4^m\}.
    \]
\end{prop}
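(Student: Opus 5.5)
We argue by induction on $m$, in the same way as for $\mathrm{RM}_4(3m-2,m)$. For the base case $m=2$ the code is $\mathrm{RM}_4(3,2)$, and we will compute its spectrum in Magma, expecting $\{0,4,6,7,\dots,16\}$. The lower end is also forced by the preliminaries. Writing $3=1\cdot3+0$, Proposition~\ref{prop:minimum-distance} gives $d_4(3,2)=4$. For the second weight, writing $3=0\cdot3+3$ with $s=0$ and $t=q-1=3$, Proposition~\ref{prop:second-weight} gives $d_4(3,2)+(t-1)\cdot4^{0}=4+2=6$. In general, for $r=3m-3=(m-2)\cdot3+3$ we have $s=m-2$ and $t=3$. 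So the minimum distance is $4$ and the second weight is $4+2\cdot4^{0}=6$ for every $m\ge2$. This already excludes the weights $1,2,3,5$ at every level, so only the occurrence of the remaining weights has to be carried through the induction.

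For the induction step, assume the assertion holds for some $m\ge2$. Proposition~\ref{prop:spectrum-inclusion}, applied with $q=4$, $m+1$ variables and $r=2$ (admissible since $2\le 3m-1$), gives
\[
\mathrm{wt}\bigl(\mathrm{RM}_4(3m,m+1)\bigr)\supseteq A:=\bigoplus_{i=1}^4\{0,4,6,7,\dots,4^m\}.
\]
We then show that $A=\{0,4,6,7,\dots,4^{m+1}\}$ by the usual interval decomposition with $N=4^m$. Each $x$ with $0\le x\le N$ in the target set is $x+0+0+0$. For $x=kN+a$ with $k\in\{1,2,3\}$ and $1\le a\le5$, write $x=(k-1)N+6+(N+a-6)$. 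Here $N+a-6$ lies in $[N-5,N-1]$, so it belongs to the summand set because $N\ge16$. All the other values $kN+a$ with $6\le a\le N$ are $k$ copies of $N$ plus $a$. Each representation uses at most four summands.

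Conversely, every element of $A$ is at most $4^{m+1}$, and $A$ cannot contain $1,2,3,5$, since every positive summand is $4$ or at least $6$. Combining the inclusion with the minimum distance $4$ and the second weight $6$ closes the induction.

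The only real obstacle is the base case $m=2$. There the sumset from $m=1$ cannot be used: $\mathrm{RM}_4(0,1)$ has spectrum $\{0,4\}$, so the inclusion only yields multiples of $4$. The weights $7,\dots,16$ of $\mathrm{RM}_4(3,2)$ must therefore be verified directly, by the Magma computation or by explicit cubic polynomials in two variables. Once that is done, the rest is routine bookkeeping.
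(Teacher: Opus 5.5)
Your proposal is correct and takes essentially the same route as the paper: a Magma base case at $m=2$, then the sumset inclusion of Proposition~\ref{prop:spectrum-inclusion} with $r=2$. Your interval decomposition $kN+a=(k-1)N+6+(N+a-6)$ is just a compact form of the paper's seven-case list, and the minimum distance $4$ and second weight $6$ exclude $1,2,3,5$ at every level, as in the paper.
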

\begin{proof}
	We proceed by induction on $m$. A computation in Magma~\cite{magma} gives
	\[
\mathrm{wt}\big(\mathrm{RM}_4(3,2)\big) = \big\{0,\,4,\,6,\,7,\,\dots,\,16\big\}.
\]
	Thus the assertion holds for $m=2$.
	
	Now assume that the assertion holds for some $m\ge 2$. By Proposition~\ref{prop:spectrum-inclusion},
	\[
	\mathrm{wt}\big(\mathrm{RM}_4(3m,\,m+1)\big) \supseteq \bigoplus_{i=1}^4 \big\{0,\,4,\,6,\,7,\,\dots,\,4^m\big\}.
	\]
	Let
    \[
    A=\bigoplus_{i=1}^4 \big\{0,\,4,\,6,\,7,\,\dots,\,4^m\big\}.
    \]
    We claim that
    \[
    A=\big\{0,\,4,\,6,\,7,\,\dots,\,4^{m+1}\big\}.
    \]
	Indeed, the following decompositions cover all the required weights.
\begin{itemize}
\item If $x=0$, $x=4$, or $6\leq x\leq 4^m$, we may write $x=x+0+0+0$, so $x\in A$.
\item If $x=4^m+a$ with $1\leq a\leq 5$, we may write $x=6+\bigl(4^m+a-6\bigr)+0+0$, which shows that $x\in A$.
\item If $4^m+6\leq x\leq 2\cdot4^m$, we may write $x=4^m+\bigl(x-4^m\bigr)+0+0$, which shows that $x\in A$.
\item If $x=2\cdot4^m+a$ with $1\leq a\leq 5$, we may write $x=4^m+6+\bigl(4^m+a-6\bigr)+0$, which shows that $x\in A$.
\item If $2\cdot4^m+6\leq x\leq 3\cdot4^m$, we may write $x=4^m+4^m+\bigl(x-2\cdot4^m\bigr)+0$, which shows that $x\in A$.
\item If $x=3\cdot4^m+a$ with $1\leq a\leq 5$, we may write $x=4^m+4^m+6+\bigl(4^m+a-6\bigr)$, which shows that $x\in A$.
\item If $3\cdot4^m+6\leq x\leq 4^{m+1}$, we may write $x=4^m+4^m+4^m+\bigl(x-3\cdot4^m\bigr)$, which shows that $x\in A$.
\end{itemize}
	Here $m\geq2$ ensures that $6\leq4^m+a-6\leq4^m$ whenever $1\leq a\leq5$, so all the summands above belong to the defining set of $A$.
	Therefore, $\{0,4,6,7,\ldots,4^{m+1}\}\subseteq A$.
	Conversely, every element of $A$ is at most $4\cdot4^m=4^{m+1}$. Moreover, every positive summand is either $4$ or at least $6$, and a sum with at least two positive summands is at least $8$. Hence none of $1,2,3,5$ can belong to $A$. Thus $A\subseteq\{0,4,6,7,\ldots,4^{m+1}\}$.
	As a consequence, $A=\{0,4,6,7,\ldots,4^{m+1}\}$.
	
	From Proposition~\ref{prop:minimum-distance}, we have $d_4(3m,\,m+1)=4$, and from Proposition~\ref{prop:second-weight}, the second weight of $\mathrm{RM}_4(3m,\,m+1)$ is $6$. Hence the weights $1, 2, 3$ and $5$ cannot occur. Since the code has length $4^{m+1}$, no weight can exceed $4^{m+1}$. Together with the above inclusion, this yields
	\[
	\mathrm{wt}\big(\mathrm{RM}_4(3m,\,m+1)\big) = \big\{0,\,4,\,6,\,7,\,\dots,\,4^{m+1}\big\}.
	\]
	This completes the proof.
\end{proof}

\section{Quinary GRM codes}\label{sec:quinary}
\begin{prop}
	If $m\ge 2$, then
    \[
    \mathrm{wt}\big(\mathrm{RM}_5(4m-4,m)\big)=\{0,\,5,\,8,\,9,\,\dots,\,5^m\}.
    \]
\end{prop}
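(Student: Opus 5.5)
Proof proposal.

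We follow the same inductive template as for $\mathrm{RM}_3(2m-3,m)$ and $\mathrm{RM}_4(3m-3,m)$.

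\emph{Base case $m=2$.} The code is $\mathrm{RM}_5(4,2)$, and we would verify by a Magma computation that
\[
\mathrm{wt}\bigl(\mathrm{RM}_5(4,2)\bigr)=\{0,5,8,9,\dots,25\}.
\]
This is a small code, of dimension $15$ and length $25$, so the computation is routine.

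\emph{Induction step.} Assume the assertion for some $m\geq2$. The order at level $m+1$ is
\[
4(m+1)-4=4m=(m+1)(q-1)-r-1 \quad\text{with } q=5,\ r=3.
\]
The order at level $m$ is $(m)(q-1)-r-1=4m-4$. The hypothesis $0\leq r\leq m(q-1)-1$ of Proposition~\ref{prop:spectrum-inclusion} holds. Hence
\[
\mathrm{wt}\bigl(\mathrm{RM}_5(4m,m+1)\bigr)\supseteq A:=\bigoplus_{i=1}^5\{0,5,8,9,\dots,5^m\}.
\]
We then check that $A=\{0,5,8,9,\dots,5^{m+1}\}$ via the analogous decompositions, for $k=0,1,2,3,4$:
\begin{itemize}
\item if $x=k\cdot5^m+a$ with $8\leq a\leq 5^m$ (or $a\in\{0,5\}$ when $k=0$), write $x=5^m+\dots+5^m+a$ with $k$ copies of $5^m$;
\item if $x=k\cdot5^m+a$ with $1\leq a\leq7$ and $k\geq1$, write $x=(k-1)\cdot5^m+8+(5^m+a-8)$.
\end{itemize}
The second case uses at most $k+1\leq5$ summands. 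It is valid because $8\leq5^m+a-8\leq5^m$, which holds since $5^m\geq25$. For the reverse inclusion, every element of $A$ is at most $5^{m+1}$. A sum of positive elements is either $5$, or at least $8$, so $1,2,3,4,6,7\notin A$.

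\emph{Excluding the small weights.} We exclude weights $1,\dots,4,6,7$ directly for the code $\mathrm{RM}_5(4m,m+1)$. Write $4m=(m-1)\cdot4+4$. In Proposition~\ref{prop:minimum-distance} this requires $S\leq q-2=3$, so we rewrite $4m=m\cdot4+0$, giving $Q=m$ and $S=0$. Then
\[
d_5(4m,m+1)=5\cdot5^{0}=5.
\]
For the second weight we use Proposition~\ref{prop:second-weight} with $s=m-1\leq(m+1)-2$ and $t=4=q-1$. This gives $c=t-1=3$, so the second weight is
\[
5+3\cdot5^{(m+1)-(m-1)-2}=5+3=8.
\]
Hence no weight lies strictly between $5$ and $8$. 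Since the length is $5^{m+1}$, the spectrum equals $\{0,5,8,9,\dots,5^{m+1}\}$, which completes the induction.

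\emph{Main obstacle.} There is little real difficulty. The step needing care is the bookkeeping in Proposition~\ref{prop:second-weight}: the parameters $s=m-1$ and $t=q-1$ must satisfy its range constraints, which is why we use the case $t=q-1$. Verifying the Magma base case is the only other point that needs attention.
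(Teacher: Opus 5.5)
Your proposal is correct and follows essentially the same route as the paper. Both use a Magma base case for $\mathrm{RM}_5(4,2)$, the sumset inclusion of Proposition~\ref{prop:spectrum-inclusion} with the same decompositions (you index them uniformly by $k$), and exclusion of $1,\dots,4,6,7$ via $d_5(4m,m+1)=5$ together with the second weight $8$ from the case $t=q-1$ of Proposition~\ref{prop:second-weight}.
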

\begin{proof}
	We proceed by induction on $m$. A computation in Magma~\cite{magma} gives
	\[
\mathrm{wt}\big(\mathrm{RM}_5(4,2)\big) = \big\{0,\,5,\,8,\,9,\,\dots,\,25\big\}.
\]
	Thus the assertion holds for $m=2$.
	
	Now assume that the assertion holds for some $m\ge 2$. By Proposition~\ref{prop:spectrum-inclusion},
	\[
	\mathrm{wt}\big(\mathrm{RM}_5(4m,\,m+1)\big) \supseteq \bigoplus_{i=1}^5 \big\{0,\,5,\,8,\,9,\,\dots,\,5^m\big\}.
	\]
	Let
    \[
    A=\bigoplus_{i=1}^5 \big\{0,\,5,\,8,\,9,\,\dots,\,5^m\big\}.
    \]
    We claim that
    \[
    A=\big\{0,\,5,\,8,\,9,\,\dots,\,5^{m+1}\big\}.
    \]

	Indeed, the following decompositions cover all the required weights.
\begin{itemize}
\item If \(x=0\), \(x=5\), or \(8\leq x\leq 5^m\), we may write $x=x+0+0+0+0$, so \(x\in A\).
\item If \(x=5^m+a\) with \(1\leq a\leq 7\), we may write $x=8+\bigl(5^m+a-8\bigr)+0+0+0$, which shows that \(x\in A\).
\item If $5^m+8\leq x\leq 2\cdot 5^m$, we may write $x=5^m+\bigl(x-5^m\bigr)+0+0+0$, which shows that \(x\in A\).
\item	If \(x=2\cdot5^m+a\) with \(1\leq a\leq7\), we may write $x=5^m+8+\bigl(5^m+a-8\bigr)+0+0$, which shows that \(x\in A\).
\item If $2\cdot5^m+8\leq x\leq3\cdot5^m$, we may write $x=5^m+5^m+\bigl(x-2\cdot5^m\bigr)+0+0$, which shows that \(x\in A\).
\item If \(x=3\cdot5^m+a\) with \(1\leq a\leq7\), we may write $x=5^m+5^m+8+\bigl(5^m+a-8\bigr)+0$, which shows that \(x\in A\).
\item If $3\cdot5^m+8\leq x\leq4\cdot5^m$, we may write $x=5^m+5^m+5^m+\bigl(x-3\cdot5^m\bigr)+0$, which shows that \(x\in A\).
\item If \(x=4\cdot5^m+a\) with \(1\leq a\leq7\), we may write $x=5^m+5^m+5^m+8+\bigl(5^m+a-8\bigr)$, which shows that \(x\in A\).
\item If
	$4\cdot5^m+8\leq x\leq5^{m+1}$, we may write $x=5^m+5^m+5^m+5^m+\bigl(x-4\cdot5^m\bigr)$, which shows that \(x\in A\). 
\end{itemize}
    Here $8\leq 5^m+a-8\leq 5^m$ for $1\leq a\leq 7$ in the induction range, so every displayed summand belongs to the defining set of $A$.

Therefore, $\{0,5,8,9,\ldots,5^{m+1}\}\subseteq A$. Conversely, every element of \(A\) is at most $5\cdot5^m=5^{m+1}$. Moreover, since the smallest positive element of $\{0,5,8,9,\ldots,5^m\}$
	is \(5\), and every other positive element is at least \(8\), none of $1,2,3,4,6,7$ can belong to \(A\). Hence $A\subseteq\{0,5,8,9,\ldots,5^{m+1}\}$. As a consequence, $A=\{0,5,8,9,\ldots,5^{m+1}\}$.
	
	From Proposition~\ref{prop:minimum-distance}, we have $d_5(4m,\,m+1)=5$, and from Proposition~\ref{prop:second-weight}, the second weight of $\mathrm{RM}_5(4m,\,m+1)$ is $8$. Hence the weights $1, 2, 3, 4$ and $6, 7$ cannot occur. Since the code has length $5^{m+1}$, no weight can exceed $5^{m+1}$. Together with the above inclusion, this yields
	\[
	\mathrm{wt}\big(\mathrm{RM}_5(4m,\,m+1)\big) = \big\{0,\,5,\,8,\,9,\,\dots,\,5^{m+1}\big\}.
	\]
	This completes the proof.
\end{proof}
\section{GRM codes over \texorpdfstring{$\mathbb{F}_7$}{F7}}\label{sec:seven}
\begin{prop}
	If $m\ge 2$, then
    \[
    \mathrm{wt}\big(\mathrm{RM}_7(6m-4,m)\big)=\{0,\,5,\,6,\,7,\,\dots,\,7^m\}.
    \]
\end{prop}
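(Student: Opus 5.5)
We proceed by induction on $m$, in the same way as for the quinary family. For the base case $m=2$ we need $\mathrm{wt}(\mathrm{RM}_7(8,2))=\{0,5,6,\dots,49\}$, which we would obtain by a Magma computation. As a sanity check, Proposition~\ref{prop:minimum-distance} with $r=8=1\cdot 6+2$ gives $Q=1$, $S=2$, so $d_7(8,2)=(7-2)\cdot7^{0}=5$. In general $6m-4=(m-1)\cdot6+2$, so $d_7(6m-4,m)=5$ for every $m\ge2$. This already rules out the weights $1,2,3,4$ at every level, and no weight exceeds $7^m$.

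For the induction step, assume the statement for some $m\geq2$. Apply Proposition~\ref{prop:spectrum-inclusion} with $q=7$ and $r=5$; this is admissible since $5\leq 6m-1$. It gives
\[
\mathrm{wt}\bigl(\mathrm{RM}_7(6m+2,m+1)\bigr)\supseteq A:=\bigoplus_{i=1}^7\{0,5,6,\dots,7^m\},
\]
because $6(m+1)-6=6m$ and $6m-6=6(m-1)$, so the two orders are $6(m+1)-4$ and $6m-4$ as required.

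We then show $A\supseteq\{0,5,6,\dots,7^{m+1}\}$ with the usual block decompositions, written for $j=0,1,\dots,6$:
\begin{itemize}
\item if $j\cdot7^m+5\leq x\leq (j+1)\cdot7^m$, write $x$ as $j$ copies of $7^m$, plus $x-j\cdot7^m$, plus zeros;
\item if $x=(j+1)\cdot 7^m+a$ with $1\leq a\leq4$ and $j\leq5$, write $x$ as $j$ copies of $7^m$, plus $5$, plus $7^m+a-5$, plus zeros.
\end{itemize}
The second case is valid because $5\leq 7^m+a-5\leq 7^m$ for $m\geq2$. Each decomposition uses at most seven summands, and together these cases cover every integer from $5$ to $7^{m+1}$.

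For the converse, every element of $A$ is at most $7\cdot7^m$, and every positive element of $A$ is at least $5$. Hence $A=\{0,5,6,\dots,7^{m+1}\}$. Combined with the minimum distance $d_7(6m+2,m+1)=5$ and the length bound, this gives equality for the spectrum at level $m+1$.

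The only nonroutine point is the base case. Unlike the earlier families, no second-weight gap is needed here, since the spectrum is an interval above the minimum distance. So the step that needs care is confirming by computation that $\mathrm{RM}_7(8,2)$ realizes every weight from $5$ to $49$. If Magma enumeration of a code of size $7^{\dim}$ is too large, we would instead exhibit explicit bivariate polynomials of degree at most $8$ for each weight, for example products of linear forms times suitable factors. Everything after that is the routine sumset bookkeeping described above.
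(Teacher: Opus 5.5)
Your argument is essentially the paper's proof: a Magma base case for $\mathrm{RM}_7(8,2)$, the sumset inclusion of Proposition~\ref{prop:spectrum-inclusion}, and the same block decompositions. Each decomposition uses at most seven summands, and $5+(7^m+a-5)$ bridges each gap of length four. You also correctly use only the minimum distance $d_7(6m+2,m+1)=5$, with no second-weight information.

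There is one slip to fix: the parameter you feed into Proposition~\ref{prop:spectrum-inclusion}. With $r=5$ the two orders are $6(m+1)-5-1=6m$ and $6m-5-1=6m-6$. That is the inclusion for the wrong family. Your own arithmetic shows this, and the conclusion ``so the two orders are $6(m+1)-4$ and $6m-4$'' does not follow from it. You need $r=3$, which gives $6(m+1)-3-1=6m+2$ and $6m-3-1=6m-4$. This choice is admissible since $0\le 3\le 6m-1$, and with it the displayed inclusion you wrote is exactly what the proposition yields.
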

\begin{proof}
	We proceed by induction on $m$. A computation in Magma~\cite{magma} gives
	\[
\mathrm{wt}\big(\mathrm{RM}_7(8,2)\big) = \big\{0,\,5,\,6,\,7,\,\dots,\,49\big\}.
\]
	Thus the assertion holds for $m=2$.
	
	Now assume that the assertion holds for some $m\ge 2$. By Proposition~\ref{prop:spectrum-inclusion},
	\[
	\mathrm{wt}\big(\mathrm{RM}_7(6m+2,\,m+1)\big) \supseteq \bigoplus_{i=1}^7 \big\{0,\,5,\,6,\,7,\,\dots,\,7^m\big\}.
	\]
	Let
    \[
    A=\bigoplus_{i=1}^7 \big\{0,\,5,\,6,\,7,\,\dots,\,7^m\big\}.
    \]
    We claim that
    \[
    A=\big\{0,\,5,\,6,\,7,\,\dots,\,7^{m+1}\big\}.
    \]
	Indeed, the following decompositions cover all the required weights. 
\begin{itemize}
\item If \(x=0\) or \(5\leq x\leq 7^m\), we may write $x=x+0+0+0+0+0+0$, so \(x\in A\).
\item	If \(x=7^m+a\) with \(1\leq a\leq4\), we may write $x=5+\bigl(7^m+a-5\bigr)+0+0+0+0+0$, which shows that \(x\in A\).
\item If $7^m+5\leq x\leq2\cdot7^m$, we may write $x=7^m+\bigl(x-7^m\bigr)+0+0+0+0+0$, which shows that \(x\in A\).
\item	If \(x=2\cdot7^m+a\) with \(1\leq a\leq4\), we may write $x=7^m+5+\bigl(7^m+a-5\bigr)+0+0+0+0$, which shows that \(x\in A\).
\item If $2\cdot7^m+5\leq x\leq3\cdot7^m$, we may write $x=7^m+7^m+\bigl(x-2\cdot7^m\bigr)+0+0+0+0$, which shows that \(x\in A\).
\item If \(x=3\cdot7^m+a\) with \(1\leq a\leq4\), we may write $x=7^m+7^m+5+\bigl(7^m+a-5\bigr)+0+0+0$, which shows that \(x\in A\).
\item If $3\cdot7^m+5\leq x\leq4\cdot7^m$, we may write $x=7^m+7^m+7^m+\bigl(x-3\cdot7^m\bigr)+0+0+0$, which shows that \(x\in A\).
\item	If \(x=4\cdot7^m+a\) with \(1\leq a\leq4\), we may write $x=7^m+7^m+7^m+5+\bigl(7^m+a-5\bigr)+0+0$, which shows that \(x\in A\).
\item If $4\cdot7^m+5\leq x\leq5\cdot7^m$, we may write $x=7^m+7^m+7^m+7^m+\bigl(x-4\cdot7^m\bigr)+0+0$, which shows that \(x\in A\).
	\item If \(x=5\cdot7^m+a\) with \(1\leq a\leq4\), we may write $x=7^m+7^m+7^m+7^m+5+\bigl(7^m+a-5\bigr)+0$, which shows that \(x\in A\).
\item	If $5\cdot7^m+5\leq x\leq6\cdot7^m$, we may write $x=7^m+7^m+7^m+7^m+7^m+\bigl(x-5\cdot7^m\bigr)+0$, which shows that \(x\in A\).
\item	If \(x=6\cdot7^m+a\) with \(1\leq a\leq4\), we may write $x=7^m+7^m+7^m+7^m+7^m+5+\bigl(7^m+a-5\bigr)$, which shows that \(x\in A\).
\item	If $6\cdot7^m+5\leq x\leq7^{m+1}$, we may write $x=7^m+7^m+7^m+7^m+7^m+7^m+\bigl(x-6\cdot7^m\bigr)$, which shows that \(x\in A\).

\end{itemize}
    Here $5\leq 7^m+a-5\leq 7^m$ for $1\leq a\leq 4$ in the induction range, so every displayed summand belongs to the defining set of $A$.
	Therefore, $\{0,5,6,7,\ldots,7^{m+1}\}\subseteq A$.
	Conversely, every element of \(A\) is at most $7\cdot7^m=7^{m+1}$.
	Moreover, since the smallest positive element of $\{0,5,6,7,\ldots,7^m\}$
	is \(5\), none of \(1,2,3,4\) can belong to \(A\). Hence
	$A\subseteq\{0,5,6,7,\ldots,7^{m+1}\}$.
	Consequently,
	$A=\{0,5,6,7,\ldots,7^{m+1}\}$.
	
	From Proposition~\ref{prop:minimum-distance}, we have $d_7(6m+2,\,m+1)=5$. Hence the weights $1, 2, 3, 4$ cannot occur. Since the code has length $7^{m+1}$, no weight can exceed $7^{m+1}$. Together with the above inclusion, this yields
	\[
	\mathrm{wt}\big(\mathrm{RM}_7(6m+2,\,m+1)\big) = \big\{0,\,5,\,6,\,7,\,\dots,\,7^{m+1}\big\}.
	\]
	This completes the proof.
\end{proof}

\section{Conclusion and open problems}\label{open problem}

We have determined the weight spectra of several families of
generalized Reed--Muller codes. In particular, our
results for $\mathrm{RM}_3(2m-c,m)$ with $c=3,4$, together with
the known spectrum for $c=2$, suggest that, for each fixed
$c\ge 2$ and all sufficiently large $m$, every integer from
$2d_c-3$ to $3^m$ occurs as a weight, where $d_c$ denotes the
minimum distance of $\mathrm{RM}_3(2m-c,m)$.
This motivates the following conjecture.

\begin{conj}\label{conj:ternary-spectrum}
	Let $c\ge 2$ be a fixed integer, and define
	\[
	d_c=
	\begin{cases}
		3^s, & \text{if } c=2s,\\
		2\cdot 3^s, & \text{if } c=2s+1.
	\end{cases}
	\]
	There exist an integer $m_0(c)\ge c$ and a set
	\[
	A_c\subseteq\{w\in\mathbb{Z}:d_c\le w<2d_c-3\},
	\]
	depending only on $c$, such that, for every $m\ge m_0(c)$,
	\[
	\operatorname{wt}\bigl(\mathrm{RM}_3(2m-c,m)\bigr)
	=
	\{0\}\cup A_c\cup\{2d_c-3,2d_c-2,\ldots,3^m\}.
	\]
\end{conj}

The conjecture holds for $c=2,3,4$, with
$A_2=\varnothing,
A_3=\{6,8\},
A_4=\{9,12\}.$
Determining the sets $A_c$ and establishing the conjecture for
$c\ge 5$ remains an open problem.

\end{document}